\providecommand{\algorithmname}{Algorithm}
\theoremstyle{plain}
\newtheorem{thm}{\protect\theoremname}
\theoremstyle{plain}
\newtheorem{prop}{\protect\propositionname}
\theoremstyle{remark}
\newtheorem{rem}{\protect\remarkname}
\theoremstyle{definition}
\newtheorem{defn}{\protect\definitionname}
\theoremstyle{plain}
\newtheorem{lem}{\protect\lemmaname}
\theoremstyle{plain}
\newtheorem{cor}{\protect\corollaryname}
\providecommand{\corollaryname}{Corollary}
\providecommand{\definitionname}{Definition}
\providecommand{\lemmaname}{Lemma}
\providecommand{\propositionname}{Proposition}
\providecommand{\remarkname}{Remark}
\providecommand{\theoremname}{Theorem}
\begin{document}

\title{On Optimal Power Allocation for Downlink Non-Orthogonal Multiple
Access Systems}

\author{Jianyue~Zhu,~\IEEEmembership{Student~Member,~IEEE}, Jiaheng~Wang,~\IEEEmembership{Senior~Member,~IEEE},
Yongming~Huang,~\IEEEmembership{Member,~IEEE}, Shiwen~He,~\IEEEmembership{Member,~IEEE},
Xiaohu~You,~\IEEEmembership{Fellow,~IEEE}, and Luxi~Yang,~\IEEEmembership{Member,~IEEE}
\thanks{The authors are with the National Mobile Communications Research Lab,
Information Science and Engineering School, Southeast University,
Nanjing, China (email: \{zhujy, jhwang, huangym, shiwenhe, xhyu, lxyang\}@seu.edu.cn). }}
\maketitle
\begin{abstract}
Non-orthogonal multiple access (NOMA) enables power-domain multiplexing
via successive interference cancellation (SIC) and has been viewed
as a promising technology for 5G communication. The full benefit of
NOMA depends on resource allocation, including power allocation and
channel assignment, for all users, which, however, leads to mixed
integer programs. In the literature, the optimal power allocation
has only been found in some special cases, while the joint optimization
of power allocation and channel assignment generally requires exhaustive
search. In this paper, we investigate resource allocation in downlink
NOMA systems. As the main contribution, we analytically characterize
the optimal power allocation with given channel assignment over multiple
channels under different performance criteria. Specifically, we consider
the maximin fairness, weighted sum rate maximization, sum rate maximization
with quality of service (QoS) constraints, energy efficiency maximization
with weights or QoS constraints in NOMA systems. We also take explicitly
into account the order constraints on the powers of the users on each
channel, which are often ignored in the existing works, and show that
they have a significant impact on SIC in NOMA systems. Then, we provide
the optimal power allocation for the considered criteria in closed
or semi-closed form. We also propose a low-complexity efficient method
to jointly optimize channel assignment and power allocation in NOMA
systems by incorporating the matching algorithm with the optimal power
allocation. Simulation results show that the joint resource optimization
using our optimal power allocation yields better performance than
the existing schemes.
\end{abstract}

\begin{IEEEkeywords}
Non-orthogonal multiple access, power allocation, successive interference
cancellation, quality of service, combinatorial optimization, sum
rate, fairness, energy efficiency, channel assignment.
\end{IEEEkeywords}

\section{Introduction}

As a result of the popularity of internet-of-things and cloud-based
applications, there is an explosive demand of new services and data
traffic for wireless communications. Hence, the fifth generation (5G)
communication systems propose higher requirements in data rates, lower
latency, and massive connectivity \cite{andrews2014will}. In order
to meet these high demands, some potential technologies, such as massive
multiple-input multiple-output (MIMO) \cite{jungnickel2014role},
millimeter wave \cite{7506342}, small cell
\cite{andrews2012femtocells,7835181,7155564} and device to device communication \cite{6510562,07572117}
will be introduced into 5G communication
systems. Recently, non-orthogonal multiple access (NOMA), which can
support overloaded transmission with limited resources and further
improve the spectral efficiency \cite{wei2016survey}, arises as a
promising technology for 5G communication systems.

The conventional multiple access schemes, which are categorized as
orthogonal multiple access technologies, are not sufficient to support
a massive connectivity because different users are allocated to orthogonal
resources in order to mitigate multiple access interference \cite{wang2006comparison}.
On the other side, by using superposition coding at the transmitter
with successive interference cancellation (SIC) at the receiver, NOMA
allows allocating one (frequency, time, code, or spatial) channel
to multiple users at the same time \cite{saito2013non}, which can
lead to better performance in terms of spectral efficiency, fairness,
or energy efficiency \cite{dai2015non}. Therefore, NOMA has received
much attention recently. In \cite{ding2015general} and \cite{ding2016application},
the authors discussed an combination of NOMA with MIMO technologies.
NOMA has also been introduced to be used with other technologies,
e.g., visible light communication \cite{2016userandpower} and millimeter
wave communication \cite{7862785}.

The basic idea of NOMA is to implement multiple access in the power
domain \cite{wei2016survey}. Hence, the key to achieve the full benefit
of NOMA systems is resource allocation, which usually include power
allocation and channel assignment. Unfortunately, the joint optimization
of power allocation and channel assignment in NOMA systems leads to
a mixed integer program \cite{zhang2016radio}, which has been proved
to be a NP-hard problem in \cite{lei2015joint}. Hence, finding the
jointly optimal resource allocation generally requires exhaustive
search \cite{sun2016optimal}, which, however, causes prohibitive
complexity and is not applicable for practical systems. Therefore,
suboptimal but efficient resource optimization methods are more preferred
in practice. Such efficient methods are often obtained by optimizing
power allocation and channel assignment alternately \cite{parida2014power,fang2016energy,zhang2016radio,hojeij2015resource}.
In this paper, we investigate resource allocation with a focus on
power allocation for downlink NOMA systems under various criteria.

\subsection{Related Works}

In NOMA systems, resource allocation has been studied for different
performance measures. In the literature, the sum rate maximization
is the most commonly adopted objective, and there are a number of
related works \cite{wang2016power,parida2014power,sun2016optimal,hojeij2015resource}.
In \cite{wang2016power}, the authors investigated the optimal power
allocation to maximize the sum rate with QoS constraints only for
two users on one channel. In \cite{parida2014power}, the problem
of maximizing the weighted sum rate in a downlink orthogonal frequency
division multiplexing access (OFDMA) based NOMA system was studied,
where the nonconvex power allocation problem was solved via DC (difference
of two convex functions) programming and thus only a suboptimal power
allocation solution was provided. In \cite{sun2016optimal}, the authors
also considered the weighted sum rate maximization and exploited monotonic
optimization to develop an optimal joint power allocation and channel
assignment policy, which, however, has an exponential complexity and
only serves as a system performance benchmark. In \cite{hojeij2015resource},
the authors introduced a resource allocation method based on waterfilling
to improve the total achieved system throughput but there is no guarantee
for the optimality of the obtained solution.

Fairness is also an important issue in NOMA systems, where the most
common fairness indication is the maximin fairness (MMF). Therefore,
a number of works has studied resource allocation for MMF, e.g., \cite{cui2016novel,choi2016power,timotheou2015fairness}.
In \cite{cui2016novel} and \cite{timotheou2015fairness}, the authors
investigated the optimal power allocation based on MMF for users on
one channel using statistical channel state information (CSI) and
instantaneous CSI, respectively. The proportional fairness scheduling
that maximizes the weighted MMF was studied in \cite{choi2016power},
where the optimal solution was only derived for two users on a single
channel.

As energy efficiency (EE) becomes an important performance measure
of wireless communication systems, the resource problem that maximizes
the EE in NOMA systems has also been considered but only in two works
\cite{zhang2016energy,fang2016energy}. In \cite{zhang2016energy},
the authors developed the optimal power allocation for maximizing
the EE with QoS constraints but only for the users on one channel.
The joint power allocation and channel assignment for maximizing the
EE was considered in \cite{fang2016energy}, whereas only a suboptimal
solution was obtained via DC programming.

In summary, so far the optimal power allocation was only found for
users on a single channel under particular performance criteria, but
unknown in the general case for all users on multiple channels. Furthermore,
in almost all existing works, the order constraints on the powers
of users were either ignored or not explicitly taken into account.

\subsection{Contributions}

In this paper, we investigate resource allocation in downlink NOMA
systems with a focus on seeking the optimal power allocation for multiple
channels and users under various performance criteria. The contributions
in this paper are summarized in the following:
\begin{itemize}
\item We consider different criteria that lead to different problem formulations,
including the maximin fairness, the weighted sum rate maximization,
the sum rate maximization with QoS constraints, the energy efficiency
maximization with weights or QoS constraints.
\item We take explicitly into account the order constraints on the powers
of users on each channel that guarantee the decoding order of SIC
on each channel unchanged in NOMA systems.
\item Then, we analytically characterize the optimal power allocation and
provide closed-form or semi-closed solutions to the formulated power
optimization problems.
\item It is shown that the power order constraints could result in an equal
signal strength, which may cause a failure of SIC or a large error
propagation. Thus, we introduce the concept of SIC-stability and identify
the conditions that avoid equal power allocation in NOMA systems under
different criteria.
\item We propose an efficient method to jointly optimize the channel assignment
and power allocation by incorporating the matching algorithm with
our optimal power allocation and iteratively using them to refine
the solution.
\item The obtained optimal power allocation can also be used with other
channel assignment algorithms and can even reduce the complexity of
the exhaustive search for jointly optimal resource allocation.
\item Finally, it is shown via simulations that the proposed joint resource
optimization method outperforms the existing schemes and achieves
near-optimal performance.
\end{itemize}
\textcolor{blue}{\vspace{-0.5cm}
}

The rest of the paper is organized as follows. Section \ref{sec2}
introduces the NOMA system model and various resource optimization
problems under different performance criteria and constraints. In
Section \ref{secMMF}, Section \ref{sec:WSR}, and Section \ref{sec:GEE},
we investigate the optimal power allocation for the MMF, sum rate
maximization, and EE maximization, respectively. In Section \ref{sec:MATCHING},
a joint channel assignment and power allocation optimization algorithm
is proposed. The performance of the proposed power allocation is evaluated
in section \ref{sec:SIMULATION} by simulations and the conclusion
is drawn in Section \ref{sec:CONCLUTION}.

\section{\label{sec2}Problem Statement}

\subsection{System Model }

Consider a downlink NOMA network wherein a \textcolor{black}{base
station (BS)}\textcolor{blue}{{} }serves $N$ users through $M$ channels.
The total bandwidth $B$ is equally divided to $M$ channels so the
bandwidth of each channel is $B_{c}=B/M$. Let $N_{m}\in\{N_{1},N_{2},...,N_{M}\}$
be the number of users using channel $m$ for $m=1,2,\cdots,M$ and
$\textrm{U}\textrm{E}_{n,m}$ denotes user $n$ on channel $m$ for
$n=1,2,\cdots,N_{m}$. The signal transmitted by the BS on each channel
$m$ can be expressed as
\[
x_{m}=\stackrel[n=1]{N_{m}}{\sum}\sqrt{p_{n,m}}s_{n}
\]
where $s_{n}$ is the symbol of $\textrm{U}\textrm{E}_{n,m}$ and
$p_{n,m}$ is the power allocated to $\textrm{U}\textrm{E}_{n,m}$.
The received signal at $\textrm{U}\textrm{E}_{n,m}$ is
\[
y_{n,m}=\sqrt{p_{n,m}}h_{n,m}s_{n}+\stackrel[i=1,i\neq n]{N_{m}}{\sum}\sqrt{p_{i,m}}h_{n,m}s_{i}+z_{n,m}
\]
where $h_{n,m}=g_{n,m}d_{n}^{-\alpha}$ is the channel coefficient
from the BS to $\textrm{U}\textrm{E}_{n,m}$, $g_{n,m}$ follows a
Rayleigh distribution, $d_{n}$ is the distance between the BS and
$\textrm{U}\textrm{E}_{n,m}$, $\alpha$ is the path-loss exponent,
and $z_{n,m}\sim\mathcal{CN}(0,\sigma_{m}^{2})$ is the additive white
Gaussian noise (AWGN).

According to the principle of NOMA, one channel can be assigned to
multiple users, who will use SIC to decode their signals. \textcolor{black}{Specifically,
let $\Gamma_{n,m}=\left|h_{n,m}\right|^{2}/\sigma_{m}^{2}$ be the
channel to noise ratio (CNR) of $\textrm{U}\textrm{E}_{n,m}$. }Assume
without loss of generality (w.l.o.g.) that the CNRs of the users on
channel $m$ are ordered as
\[
\Gamma_{1,m}\geq\cdots\geq\Gamma_{n,m}\geq\cdots\geq\Gamma_{N_{m},m}
\]
i.e., $\textrm{U}\textrm{E}_{1,m}$ and $\textrm{U}\textrm{E}_{N_{m},m}$
are the strongest and weakest users on channel $m$, respectively.
Then, the NOMA protocol allocates higher powers to the users with
lower CNRs \cite{wei2016survey,ding2015cooperative}, leading to $p_{1,m}\leq\cdots\leq p_{n,m}\leq\cdots\leq p_{N_{m},m}$.
Hence, $\textrm{U}\textrm{E}_{n,m}$ is able to decode signals of
$\textrm{U}\textrm{E}_{i,m}$ for $i>n$ and remove them from its
own signal, but treats the signals from $\textrm{U}\textrm{E}_{i,m}$
for $i<n$ as interference. Therefore, the signal to interference-plus-noise
ratio (SINR) of $\textrm{U}\textrm{E}_{n,m}$ using SIC is given by
\[
\gamma_{n,m}=\frac{p_{n,m}\Gamma_{n,m}}{1+\sum_{i=1}^{n-1}p_{i,m}\Gamma_{n,m}}.
\]
Thus, the data rate of $\textrm{U}\textrm{E}_{n,m}$ is
\[
R_{n,m}(p_{n,m})=B_{c}\log\left(1+\frac{p_{n,m}\Gamma_{n,m}}{1+\sum_{i=1}^{n-1}p_{i,m}\Gamma_{n,m}}\right).
\]

Using SIC at each user's receiver causes additional complexity, which
is proportional to the number of users on the same channel. Thus,
in practice, each channel is often restricted to be assigned to two
users \cite{ding2016impact,ding2015cooperative,fang2016energy}, which
is also beneficial to reduce the error propagation of SIC. In this
paper, we would also like to focus on this typical situation and assume
that $N_{m}=2$ for $m=1,2,\cdots,M$ and $N=2M$. In this case, suppose
w.l.o.g. that the CNRs of $\textrm{U}\textrm{E}_{1,m}$ and $\textrm{U}\textrm{E}_{2,m}$
are ordered as $\Gamma_{1,m}\geq\Gamma_{2,m}$. Then, the rates of
$\textrm{U}\textrm{E}_{1,m}$ and $\textrm{U}\textrm{E}_{2,m}$ on
channel $m$ are given respectively by
\[
\begin{array}{l}
R_{1,m}=B_{c}\log\left(1+p_{1,m}\Gamma_{1,m}\right),\\
R_{2,m}=B_{c}\log\left(1+\frac{p_{2,m}\Gamma_{2,m}}{p_{1,m}\Gamma_{2,m}+1}\right).
\end{array}
\]

\subsection{Problem Formulation\label{subsec:Problem-Formulation}}

The performance of a NOMA scheme relies on resource allocation, including
power allocation and channel assignment, for all users. In this paper,
we investigate optimization of resource allocation for NOMA systems.
For this purpose, we consider the following performance measures.
\begin{enumerate}
\item \textit{Maximin fairness}: A common criterion is the maximin fairness
(MMF), which aims to provide fairness for all users. The corresponding
resource allocation problem is given by
\begin{equation}
\max\underset{m=1,\ldots,M}{\min}~\left\{ R_{1,m},R_{2,m}\right\} .\label{maximin fairness}
\end{equation}
The similar problems have been studied in \cite{choi2016power,cui2016novel,timotheou2015fairness},
whereas the optimal power allocation was only found for a few users
on a single channel but unknown for all users over multiple channels.
\item \textit{Sum rate: }The most common objective is to maximize the sum
rate (SR) of all users. To avoid that the resource on each channel
is occupied by one user, weights or QoS constraints are often introduced
into SR maximization. In this paper, we consider both the weighted
SR maximization:
\begin{equation}
\max\ \sum_{m=1}^{M}\left(W_{1,m}R_{1,m}+W_{2,m}R_{2,m}\right)\label{max WSR}
\end{equation}
where $W_{n,m}$ is the weight of $\textrm{U}\textrm{E}_{n,m}$, and
the SR maximization with QoS constraints:
\begin{align}
\max~ & \sum_{m=1}^{M}\left(R_{1,m}+R_{2,m}\right)\label{max QSR}\\
\textrm{s.t.\ } & R_{n,m}\geq R_{n,m}^{\min},\ n=1,2,\ \forall m\nonumber
\end{align}
where $R_{n,m}^{\min}$ is the QoS threshold of $\textrm{U}\textrm{E}_{n,m}$.
Although two problems have been studied in a number of works, e.g.,
\cite{sun2016optimal,wang2016power,parida2014power,hojeij2015resource},
the optimal power allocation was only found for two users on one channel
\cite{wang2016power}, while the joint resource optimization is either
suboptimal \cite{parida2014power,hojeij2015resource} or needs exhaustive
search \cite{sun2016optimal}.
\item \textit{Energy efficiency}: In this paper, we also consider improving
energy efficiency (EE) of the NOMA system, which is defined as the
ratio between the sum rate and the power consumption of the whole
system. When weights or QoS constraints are introduced, the EE maximization
problem can be formulated as
\begin{align}
\max~ & \frac{\sum_{m=1}^{M}\left(W_{1,m}R_{1,m}+W_{2,m}R_{2,m}\right)}{P_{T}+\sum_{m=1}^{M}\left(p_{1,m}+p_{2,m}\right)}\label{max WEE}
\end{align}
or
\begin{align}
\max~ & \frac{\sum_{m=1}^{M}\left(R_{1,m}+R_{2,m}\right)}{P_{T}+\sum_{m=1}^{M}\left(p_{1,m}+p_{2,m}\right)}\label{max QEE}\\
\textrm{s.t.\ } & R_{n,m}\geq R_{n,m}^{\min},\ n=1,2,\ \forall m\nonumber
\end{align}
where $P_{T}$ is the power consumption of the circuits and SIC on
all channels. This problem has only been studied in \cite{zhang2016energy}
but only for one channel.
\end{enumerate}
\textcolor{blue}{\vspace{-0.5cm}
}

In addition to the above objectives and QoS constraints, one shall
also consider power constraints in NOMA systems. The transmit power
constraint of the BS is given by
\[
\sum_{m=1}^{M}(p_{1,m}+p_{2,m})\leq P
\]
where $P$ is the total power budget of the BS. In NOMA systems, there
is an implicit power constraint for the users on each channel $m$,
i.e .,
\[
p_{1,m}\leq p_{2,m},\ m=1,\ldots,M
\]
which is to guarantee that a higher power is allocated to the user
with a lower CNR (i.e.,$\textrm{U}\textrm{E}_{2,m}$) on channel $m$
so that the decoding order of the SIC is not changed. However, in
most existing works, the power order constraints were ignored. In
this paper, we will show that it is important to take such constraints
into account explicitly in power allocation for NOMA.

The joint optimization of power allocation and channel assignment
in NOMA systems is, unfortunately, a mixed integer problem. Finding
the jointly optimal solution requires exhaustive search \cite{zhang2016radio},
which results in prohibitive computational complexity. Therefore,
in practice, power allocation and channel assignment are often separately
and alternatively optimized, i.e., fix one and optimize the other
\cite{fang2016energy,parida2014power,zhang2016radio}, which may lead
to, though possibly suboptimal, efficient resource allocation solutions.
In this paper, we would also like to use this methodology. Specifically,
we first optimize power allocation with given channel assignment,
and then optimize channel assignment. The most exciting thing is that,
different from all existing works, we are able to find the optimal
power allocation for all users over multiple channels for all above
considered performance measures. The optimal power allocation is either
given in a closed-form expression or can be efficiently obtained via
the proposed algorithms. Our results will dramatically simplify the
joint resource allocation and improve the system performance.

\section{\label{secMMF}Optimal Power Allocation for Maximin Fairness}

The NOMA scheme enables a flexible management of the users' achievable
rates and provides an efficient way to enhance user fairness. In this
section, we study the optimal power allocation to achieve the maximin
fairness (MMF) in the NOMA system. According to (\ref{maximin fairness}),
with given channel assignment, the MMF problem is equivalent to the
following power allocation problem:
\[
\mathcal{OP}_{1}^{\textrm{MMF}}\!\!:\!\!\begin{array}{c}
\underset{\boldsymbol{p_{1},p_{2}}}{\max}\\
\textrm{s.t.}
\end{array}\!\begin{array}{l}
\!\!\!\!\!\underset{m=1,\ldots,M}{\min}\!\left\{ R_{1,m}(p_{1,m},p_{2,m}),R_{2,m}(p_{1,m},p_{2,m})\right\} \\
\boldsymbol{0}\leq\boldsymbol{p}_{1}\leq\boldsymbol{p}_{2},\sum_{m=1}^{M}p_{1,m}+p_{2,m}\leq P
\end{array}
\]
where $\boldsymbol{p}_{1}=\{p_{1,m}\}_{m=1}^{M}$ and $\boldsymbol{p}_{2}=\{p_{2,m}\}_{m=1}^{M}$.
However, $\mathcal{OP}_{1}^{\textrm{MMF}}$ is a nonconvex problem,
as its objective is not concave. Its optimal solution has only been
found in the special case $M=1$\cite{choi2016power,cui2016novel,timotheou2015fairness},
i.e., a single channel, but unknown in the general case yet.

To address this problem, we first introduce auxiliary variables $\boldsymbol{q}=\{q_{m}\}_{m=1}^{M}$,
where $q_{m}$ represents the power budget for channel $m$ with $p_{1,m}+p_{2,m}=q_{m}$.
Suppose that the channel power budgets $\{q_{m}\}_{m=1}^{M}$ are
given. Then, $\mathcal{OP}_{1}^{\textrm{MMF}}$ is decomposed into
a group of subproblems for each channel $m$:
\[
\mathcal{OP}_{2,m}^{\textrm{MMF}}\!:\!\!\begin{array}{cl}
\underset{p_{1,m},p_{2,m}}{\max} & \!\!\!\!\!\min\left\{ R_{1,m}(p_{1,m},p_{2,m}),R_{2,m}(p_{1,m},p_{2,m})\right\} \\
\textrm{s.t.} & 0\leq p_{1,m}\leq p_{2,m},p_{1,m}+p_{2,m}=q_{m}.
\end{array}
\]
We first solve subproblem $\mathcal{OP}_{2,m}^{\textrm{MMF}}$ and
show that its optimal solution is given in a closed form.
\begin{prop}
\label{P1 MMF}Suppose that $\Gamma_{1,m}\geq\Gamma_{2,m}$. Then,
the optimal solution to $\mathcal{OP}_{2,m}^{\textrm{MMF}}$ is given
by $p_{1,m}^{\textrm{ }\star}=\Lambda_{m}$ and $p_{2,m}^{\star}=q_{m}-p_{1,m}^{\star}$,
where $\Gamma_{l,m}\triangleq\left|h_{l,m}\right|^{2}/\sigma_{m}^{2}$
and
\[
\Lambda_{m}\triangleq\frac{-\left(\Gamma_{1,m}+\Gamma_{2,m}\right)\!+\!\sqrt{\left(\Gamma_{1,m}+\Gamma_{2,m}\right){}^{2}+4\Gamma_{1,m}\Gamma_{2,m}^{2}q_{m}}}{2\Gamma_{1,m}\Gamma_{2,m}}.
\]
\end{prop}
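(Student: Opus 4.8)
The plan is to reduce $\mathcal{OP}_{2,m}^{\textrm{MMF}}$ to a single-variable problem by using the equality constraint $p_{1,m}+p_{2,m}=q_{m}$ to eliminate $p_{2,m}=q_{m}-p_{1,m}$. The remaining constraints $0\le p_{1,m}$ and $p_{1,m}\le p_{2,m}$ then collapse to $p_{1,m}\in[0,q_{m}/2]$, so the task is to maximize $\min\{R_{1,m},R_{2,m}\}$ over this interval. First I would establish the monotonicity of the two rates in $p_{1,m}$: the term $p_{1,m}\Gamma_{1,m}$ inside $R_{1,m}$ is strictly increasing in $p_{1,m}$, while the SINR argument $(q_{m}-p_{1,m})\Gamma_{2,m}/(p_{1,m}\Gamma_{2,m}+1)$ inside $R_{2,m}$ has a strictly decreasing numerator and strictly increasing denominator, hence $R_{2,m}$ is strictly decreasing in $p_{1,m}$.

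Given one strictly increasing and one strictly decreasing rate, the next step is the standard max-min balancing argument: $\min\{R_{1,m},R_{2,m}\}$ coincides with $R_{1,m}$ for small $p_{1,m}$ and with $R_{2,m}$ for large $p_{1,m}$, so it first increases and then decreases, and its maximum is attained at the unique crossing point where $R_{1,m}=R_{2,m}$, provided this point lies in $[0,q_{m}/2]$. Setting the SINRs equal, $p_{1,m}\Gamma_{1,m}=(q_{m}-p_{1,m})\Gamma_{2,m}/(p_{1,m}\Gamma_{2,m}+1)$, and clearing the denominator yields the quadratic $\Gamma_{1,m}\Gamma_{2,m}\,p_{1,m}^{2}+(\Gamma_{1,m}+\Gamma_{2,m})p_{1,m}-q_{m}\Gamma_{2,m}=0$. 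Since the constant term $-q_{m}\Gamma_{2,m}$ is negative, the roots have opposite signs, so the unique admissible (positive) root is forced; it is exactly $\Lambda_{m}$, and then $p_{2,m}^{\star}=q_{m}-\Lambda_{m}$.

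The step I expect to be the main obstacle --- and the only place where the hypothesis $\Gamma_{1,m}\ge\Gamma_{2,m}$ is actually used --- is verifying that this crossing point is interior to $[0,q_{m}/2]$, so that the balancing solution is feasible and the order constraint $p_{1,m}\le p_{2,m}$ is not binding. I would check the two endpoints of $R_{1,m}-R_{2,m}$: at $p_{1,m}=0$ one has $R_{1,m}=0<R_{2,m}$, while at $p_{1,m}=q_{m}/2$ a short computation reduces the inequality $R_{1,m}\ge R_{2,m}$ to $\Gamma_{1,m}\ge\Gamma_{2,m}/((q_{m}/2)\Gamma_{2,m}+1)$, which holds strictly because $\Gamma_{1,m}\ge\Gamma_{2,m}$ and the right-hand denominator exceeds one. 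By continuity and the strict monotonicity of $R_{1,m}-R_{2,m}$, the crossing therefore lies strictly inside $(0,q_{m}/2)$ and is unique, confirming that $p_{1,m}^{\star}=\Lambda_{m}$ is simultaneously optimal and feasible.
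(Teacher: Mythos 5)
Your argument is correct and follows essentially the same route as the paper's proof: eliminate $p_{2,m}=q_{m}-p_{1,m}$, observe that $R_{1,m}$ is increasing and $R_{2,m}$ decreasing in $p_{1,m}$, and conclude that the max-min is attained at the balancing point $\Lambda_{m}$. Your explicit endpoint check that $\Lambda_{m}<q_{m}/2$ (using $\Gamma_{1,m}\geq\Gamma_{2,m}$) merely fills in the step the paper dismisses with ``it can be verified'' and establishes separately in Lemma \ref{lem:mmf_stable}.
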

\begin{proof} From the constraint $p_{1,m}+p_{2,m}=q_{m}$, we have
$p_{2,m}=q_{m}-p_{1,m}$, so $p_{1,m}\leq p_{2,m}$ is equivalent
to $p_{1,m}\leq q_{m}/2$. Substituting $p_{2,m}=q_{m}-p_{1,m}$ into
$R_{1,m}$ and $R_{2,m}$, we obtain
\[
\begin{array}{l}
R_{1,m}(p_{1,m})\triangleq B_{c}\log\left(1+p_{1,m}\Gamma_{1,m}\right),\\
R_{2,m}(p_{1,m})\triangleq B_{c}\log\left(\frac{q_{m}\Gamma_{2,m}+1}{p_{1,m}\Gamma_{2,m}+1}\right).
\end{array}
\]
If $p_{1,m}\geq\Lambda_{m}$, then $R_{1,m}(p_{1,m})\geq R_{2,m}(p_{1,m})$
and the objective of $\mathcal{OP}_{2,m}^{\textrm{MMF}}$ is $R_{2,m}(p_{1,m})$,
which is decreasing in $p_{1,m}$. So the maximizer is the lower bound
$p_{1,m}=\Lambda_{m}$. If $p_{1,m}\leq\Lambda_{m}$, then $R_{1,m}(p_{1,m})\leq R_{2,m}(p_{1,m})$
and the objective of $\mathcal{OP}_{2,m}^{\textrm{MMF}}$ is $R_{1,m}(p_{1,m})$,
which is increasing in $p_{1,m}$. So the maximizer is the upper bound
$p_{1,m}=\Lambda_{m}$. Therefore, the optimal point is $p_{1,m}^{\star}=\Lambda_{m}$.
Finally, it can be verified that $p_{1,m}^{\star}=\Lambda_{m}\leq q_{m}/2$
. \end{proof}
\begin{rem}
From Proposition \ref{P1 MMF}, we obtain $R_{1,m}(p_{1,m}^{\star},p_{2,m}^{\star})=R_{2,m}(p_{1,m}^{\star},p_{2,m}^{\star})=f_{m}^{\textrm{MMF}\star}$,
where
\begin{multline}
\!\!\!\!\!f_{m}^{\textrm{MMF}\star}\triangleq\\
B_{c}\log\!\!\left(\!\!\frac{\Gamma_{2,m}\!-\!\Gamma_{1,m}\!+\!\sqrt{\left(\Gamma_{1,m}\!+\!\Gamma_{2,m}\right){}^{2}\!+\!4\Gamma_{1,m}\Gamma_{2,m}^{2}q_{m}}}{2\Gamma_{2,m}}\!\right)\!\!,\label{R_1m*,R_2m*-MMF}
\end{multline}
i.e., $\textrm{U}\textrm{E}_{1,m}$ and $\textrm{U}\textrm{E}_{2,m}$
achieve the same rate at the optimal point. This indicates that, under
the MMF criterion, the NOMA system will provide absolute fairness
for two users on one channel.

To elaborate another important insight, we introduce the following
definition.
\end{rem}
\begin{defn}
\label{SIC-stable}A NOMA system is called \emph{SIC-stable} if the
optimal power allocation satisfies $p_{1,m}<p_{2,m}$ on each channel
$m$.
\end{defn}
\begin{rem}
\textcolor{black}{In NOMA systems, SIC is performed according to the
order of the CNRs of the users on one channel \cite{ding2015cooperative,wei2016survey},
which is guaranteed by imposing an inverse order of the powers allocated
to the users, i.e., $p_{1,m}\leq p_{2,m}$ on channel $m$. Specifically,
$\textrm{U}\textrm{E}_{1,m}$ (the strong user with a higher CNR)
first decodes the signal of $\textrm{U}\textrm{E}_{2,m}$ (the weak
user with a lower CNR) and then subtracts it from the superposed signal.
Therefore, from the SIC perspective, a difference between the signal
strengths of $\textrm{U}\textrm{E}_{2,m}$ and $\textrm{U}\textrm{E}_{1,m}$
is necessary \cite{ali2016dynamic}. However, even with the power
order constraint, the power optimization may lead to $p_{1,m}=p_{2,m}$,
i.e., $\textrm{U}\textrm{E}_{1,m}$ and $\textrm{U}\textrm{E}_{2,m}$
have the same signal strength, which is the worst situation for SIC.
In this case, SIC may fail or has a large error propagation and thus
is unstable. Indeed, the authors in \cite{6868214} pointed out that
the power of the weak user must be strictly larger than that of the
strong user, otherwise the users' outage probabilities will always
be one. Definition \ref{SIC-stable} explicitly concretizes such a
practical requirement in NOMA systems.}
\end{rem}
\begin{lem}
\label{lem:mmf_stable}The NOMA system is SIC-stable for $\mathcal{OP}_{1}^{\textrm{MMF}}$.
\end{lem}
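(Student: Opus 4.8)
The plan is to reduce SIC-stability, i.e., the strict inequality $p_{1,m}^{\star} < p_{2,m}^{\star}$ on every channel, to two facts: that for any strictly positive channel budget the closed-form minimizer of Proposition~\ref{P1 MMF} lies strictly below $q_m/2$, and that every channel receives a strictly positive budget at the overall optimum of $\mathcal{OP}_{1}^{\textrm{MMF}}$.

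First I would sharpen the inequality $\Lambda_m \le q_m/2$ noted at the end of the proof of Proposition~\ref{P1 MMF} into a strict one. Since $p_{2,m}^{\star} = q_m - \Lambda_m$, the target condition $p_{1,m}^{\star} < p_{2,m}^{\star}$ is exactly $\Lambda_m < q_m/2$. Substituting the closed form of $\Lambda_m$, isolating the square root, and squaring both (positive) sides, this collapses after cancellation to $2(\Gamma_{2,m} - \Gamma_{1,m}) < \Gamma_{1,m}\Gamma_{2,m}\, q_m$. Since $\Gamma_{1,m} \ge \Gamma_{2,m}$, the left-hand side is nonpositive while the right-hand side is strictly positive whenever $q_m > 0$; hence the strict inequality holds on every channel carrying positive power. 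This step is routine algebra and presents no real difficulty.

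The substantive part is to show $q_m^{\star} > 0$ for all $m$ at the optimum of $\mathcal{OP}_{1}^{\textrm{MMF}}$, since Proposition~\ref{P1 MMF} only resolves each channel for a fixed budget. For this I would use the per-channel optimal value $f_m^{\textrm{MMF}\star}$ in (\ref{R_1m*,R_2m*-MMF}): it is strictly increasing in $q_m$ and equals zero at $q_m = 0$, because there the square root reduces to $\Gamma_{1,m} + \Gamma_{2,m}$, making the logarithm's argument equal to $1$. Consequently, if some channel were assigned $q_m^{\star} = 0$, its users would achieve zero rate and the maximin objective would be pinned at zero. But the equal split $q_m = P/M > 0$ is feasible and yields a strictly positive minimum of the $f_m^{\textrm{MMF}\star}$; hence the optimal value of $\mathcal{OP}_{1}^{\textrm{MMF}}$ is strictly positive, which is incompatible with any channel having $q_m^{\star} = 0$. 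Therefore every channel receives a strictly positive budget at the optimum.

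Combining the two facts, at the optimum each channel has $q_m^{\star} > 0$ and thus $p_{1,m}^{\star} = \Lambda_m < q_m^{\star}/2 < q_m^{\star} - \Lambda_m = p_{2,m}^{\star}$, which is precisely the SIC-stability condition of Definition~\ref{SIC-stable}. The only place requiring genuine care is the strict positivity of the channel budgets; once the monotonicity of $f_m^{\textrm{MMF}\star}$ in $q_m$ is established, this follows from the feasibility-of-equal-split argument above, and the algebraic strict inequality then finishes the proof.
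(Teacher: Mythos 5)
Your proof is correct and rests on the same pivot as the paper's: reducing SIC-stability to the strict inequality $\Lambda_m < q_m/2$. The difference is purely in execution. The paper rationalizes the closed form to get $\Lambda_{m}=\frac{2\Gamma_{2,m}q_{m}}{(\Gamma_{1,m}+\Gamma_{2,m})+\sqrt{(\Gamma_{1,m}+\Gamma_{2,m})^{2}+4\Gamma_{1,m}\Gamma_{2,m}^{2}q_{m}}}$ and bounds the denominator from below, whereas you isolate the square root and square, arriving at the equivalent condition $2(\Gamma_{2,m}-\Gamma_{1,m})<\Gamma_{1,m}\Gamma_{2,m}q_{m}$; both are two-line computations and both are valid. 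What you add that the paper omits is the explicit verification that $q_m^{\star}>0$ on every channel at the optimum of $\mathcal{OP}_{1}^{\textrm{MMF}}$: the paper's chain of inequalities is strict only when $q_m>0$ (at $q_m=0$ one gets $\Lambda_m=0=q_m/2$), and the paper tacitly assumes this. Your argument via $f_m^{\textrm{MMF}\star}(0)=0$, strict monotonicity of $f_m^{\textrm{MMF}\star}$, and feasibility of the equal split $q_m=P/M$ closes that gap cleanly, so your version is the more complete of the two.
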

\begin{proof} Given $\Gamma_{1,m}\geq\Gamma_{2,m}$, we have
\begin{align*}
\Lambda_{m} & =\frac{2\Gamma_{2,m}q_{m}}{\left(\Gamma_{1,m}+\Gamma_{2,m}\right)+\sqrt{\left(\Gamma_{1,m}+\Gamma_{2,m}\right){}^{2}+4\Gamma_{1,m}\Gamma_{2,m}^{2}q_{m}}}\\
 & <\frac{\Gamma_{2,m}q_{m}}{\Gamma_{1,m}+\Gamma_{2,m}}\leq\frac{q_{m}}{2},
\end{align*}
which indicates $p_{1,m}^{\star}<p_{2,m}^{\star}$ for each $m$.
Therefore, the NOMA system is SIC-stable. \end{proof}
\begin{rem}
\textcolor{black}{According to Definition \ref{SIC-stable} and indicated
by Lemma \ref{lem:mmf_stable}, the NOMA system is always SIC-stable
under the MMF criterion, as in this case the optimal power allocation
always satisfies $p_{1,m}^{\star}<p_{2,m}^{\star}$, $\forall m$.
On the other hand, in the subsequent sections, we will show that a
NOMA system is not always SIC-stable under different criteria and
constraints.}
\end{rem}
To obtain the optimal power allocation for all channels, we shall
optimize the power budget $q_{m}$ for each channel $m$. According
to $\mathcal{OP}_{1}^{\textrm{MMF}}$ and $\mathcal{OP}_{2,m}^{\textrm{MMF}}$,
the corresponding power budget optimization problem is given by
\[
\mathcal{OP}_{3}^{\textrm{MMF}}:\begin{array}{cl}
\underset{\boldsymbol{q}}{\max}~ & \underset{m=1,\ldots,M}{\min}\ f_{m}^{\textrm{MMF}\star}(q_{m})\\
\textrm{s.t.} & \sum_{m=1}^{M}q_{m}\leq P,\;\boldsymbol{q}\geq\boldsymbol{0}
\end{array}
\]
where $f_{m}^{\textrm{MMF}\star}(q_{m})$ is the optimal objective
value of $\mathcal{OP}_{2,m}^{\textrm{MMF}}$ and given in \eqref{R_1m*,R_2m*-MMF}.
\begin{lem}
\label{lem:mmf_concave}$f_{m}^{\textrm{MMF}\star}(q_{m})$ is a concave
function.
\end{lem}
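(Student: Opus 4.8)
The plan is to strip away the multiplicative and additive constants and reduce the claim to the concavity of a logarithm of a square-root function, which then follows from a standard composition rule. First I would write, for fixed $m$ and with the shorthand $a\triangleq\Gamma_{1,m}$, $b\triangleq\Gamma_{2,m}$ (so that $a\geq b>0$),
\[
f_{m}^{\textrm{MMF}\star}(q_{m})=B_{c}\log g(q_{m})-B_{c}\log(2b),\qquad g(q)\triangleq (b-a)+\sqrt{(a+b)^{2}+4ab^{2}q}.
\]
Since $B_{c}>0$ and $B_{c}\log(2b)$ is a constant, concavity of $f_{m}^{\textrm{MMF}\star}$ on $q\geq 0$ is equivalent to concavity of the map $q\mapsto\log g(q)$.

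Next I would verify the two ingredients needed for the composition argument. The inner function $g$ is concave: the map $q\mapsto (a+b)^{2}+4ab^{2}q$ is affine with nonnegative slope $4ab^{2}$, the square root is concave and nondecreasing, and adding the constant $b-a$ preserves concavity. I also need $g(q)$ to lie in the domain of $\log$, i.e.\ $g(q)>0$. For $q\geq 0$ we have $\sqrt{(a+b)^{2}+4ab^{2}q}\geq a+b$, whence $g(q)\geq (b-a)+(a+b)=2b>0$; in fact the argument of the logarithm is then $\geq 1$, so $f_{m}^{\textrm{MMF}\star}\geq 0$ with equality at $q=0$, which matches the intuition that zero power budget yields zero rate.

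Finally I would invoke the composition rule: if $\phi$ is concave and nondecreasing and $g$ is concave, then $\phi\circ g$ is concave. Taking $\phi=\log$, which is concave and increasing on $(0,\infty)$, and using that $g$ is concave with range contained in $(0,\infty)$, I conclude that $\log g$ is concave, and therefore so is $f_{m}^{\textrm{MMF}\star}$.

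As an alternative to the composition rule one may differentiate twice: writing $u(q)=\sqrt{(a+b)^{2}+4ab^{2}q}$ one has $u''<0$ and $g=u+(b-a)>0$, so the numerator $u''g-(u')^{2}$ of $(\log g)''=\bigl(u''g-(u')^{2}\bigr)/g^{2}$ is strictly negative, giving $(\log g)''<0$. I expect no genuine obstacle here; the only step requiring a little care is confirming $g(q)>0$ (equivalently, that the logarithm is well defined), which is precisely where the ordering $\Gamma_{1,m}\geq\Gamma_{2,m}$ is used.
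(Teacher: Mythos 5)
Your proof is correct. The paper disposes of this lemma in one line, simply asserting that $\partial^{2}f_{m}^{\textrm{MMF}\star}/\partial q_{m}^{2}<0$ can be verified; your primary argument instead peels off the constants and invokes the composition rule (concave nondecreasing $\log$ applied to the concave function $g(q)=(b-a)+\sqrt{(a+b)^{2}+4ab^{2}q}$, which is itself an affine map followed by the concave nondecreasing square root plus a constant). This is a genuinely cleaner route: it avoids any explicit differentiation, makes transparent why concavity holds, and your check that $g(q)\geq 2b>0$ supplies the domain verification that the paper's terse proof leaves implicit. Your appended second-derivative computation, $(\log g)''=\bigl(u''g-(u')^{2}\bigr)/g^{2}<0$, is essentially the calculation the paper's proof gestures at, so you have in effect given both arguments. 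One small correction: the positivity bound $g(q)\geq(b-a)+(a+b)=2b>0$ holds for \emph{any} positive $a,b$ and does not actually use the ordering $\Gamma_{1,m}\geq\Gamma_{2,m}$, so your closing remark that this is precisely where that hypothesis enters is an overstatement --- the ordering is simply not needed for this lemma. This does not affect the validity of the proof.
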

\begin{proof} It can be verified that $\partial^{2}f_{m}^{\textrm{MMF}\star}/\partial q_{m}^{2}<0$
and hence $f_{m}^{\textrm{MMF}\star}(q_{m})$ is concave. \end{proof}
From Lemma \ref{lem:mmf_concave}, $\mathcal{OP}_{3}^{\textrm{MMF}}$
is actually a convex problem, whose solution can be efficiently found
via standard convex optimization tools, e.g., CVX. Nevertheless, we
are able to analytically characterize the optimal solution to \eqref{max t}.
\begin{thm}
\label{T1MMF}The optimal solution to $\mathcal{OP}_{3}^{\textrm{MMF}}$
is given by
\begin{equation}
q_{m}^{\star}=\frac{\left(Z\left(\lambda\right)\Gamma_{2,m}+\Gamma_{1,m}\right)\left(Z\left(\lambda\right)-1\right)}{\Gamma_{1,m}\Gamma_{2,m}},\ \forall m\label{optimal q_m MMF}
\end{equation}
where
\begin{align*}
Z\left(\lambda\right)\triangleq & X+\sqrt{X^{2}+\frac{B_{c}}{2\lambda\sum_{m=1}^{M}1/\Gamma_{1,m}}},\\
X\triangleq & \frac{\sum_{m=1}^{M}\left(\Gamma_{2,m}-\Gamma_{1,m}\right)/\left(\Gamma_{1,m}\Gamma_{2,m}\right)}{4\sum_{m=1}^{M}1/\Gamma_{1,m}}
\end{align*}
and $\lambda$ is chosen such that $\sum_{m=1}^{M}q_{m}^{\star}=P$.
\end{thm}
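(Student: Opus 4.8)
The plan is to exploit the convexity established in Lemma \ref{lem:mmf_concave}: since $\mathcal{OP}_{3}^{\textrm{MMF}}$ maximizes the pointwise minimum of the concave, strictly increasing functions $f_{m}^{\textrm{MMF}\star}(q_{m})$ over the convex feasible set $\{\boldsymbol{q}\ge\boldsymbol{0},\ \sum_{m}q_{m}\le P\}$, the KKT conditions are both necessary and sufficient for global optimality. First I would pass to the epigraph reformulation \eqref{max t}, introducing an auxiliary variable $t$ with constraints $f_{m}^{\textrm{MMF}\star}(q_{m})\ge t$, and form the Lagrangian with multipliers $\lambda\ge0$ for the total-power constraint, $\mu_{m}\ge0$ for the $M$ rate constraints, and $\nu_{m}\ge0$ for $q_{m}\ge0$.

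The second step is to show that at the optimum all per-channel rates coincide. Stationarity in $t$ gives $\sum_{m}\mu_{m}=1$, and stationarity in $q_{m}$ gives $\mu_{m}(f_{m}^{\textrm{MMF}\star})'(q_{m})-\lambda+\nu_{m}=0$. Because each $f_{m}^{\textrm{MMF}\star}$ is strictly increasing, the power budget must be fully used (otherwise one could add power to a channel attaining the minimum and strictly raise the objective), so $\lambda>0$; and since $q_{m}^{\star}=0$ would force a zero rate and hence a zero objective, every $q_{m}^{\star}>0$, whence $\nu_{m}=0$. Consequently $\mu_{m}=\lambda/(f_{m}^{\textrm{MMF}\star})'(q_{m}^{\star})>0$ for all $m$, and complementary slackness forces every rate constraint to bind: $f_{m}^{\textrm{MMF}\star}(q_{m}^{\star})=t^{\star}$ for all $m$.

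With equal rates in hand, I would set the common value $Z\triangleq\exp(t^{\star}/B_{c})$, which is channel-independent, and invert \eqref{R_1m*,R_2m*-MMF}. Setting the logarithm argument equal to $Z$, isolating the square root, and squaring, the cross terms simplify via $(\Gamma_{1,m}+\Gamma_{2,m})^{2}-(\Gamma_{1,m}-\Gamma_{2,m})^{2}=4\Gamma_{1,m}\Gamma_{2,m}$, and the right-hand side factors cleanly to yield $\Gamma_{1,m}\Gamma_{2,m}\,q_{m}^{\star}=(Z-1)(Z\Gamma_{2,m}+\Gamma_{1,m})$, i.e.\ exactly \eqref{optimal q_m MMF}. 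This inversion-and-collapse is the main obstacle: one must verify that the unwieldy radical expression really reduces to the stated product form, and check that the constraint $Z\ge1$ selects the correct (positive) root.

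Finally, to recover the multiplier parametrization $Z(\lambda)$, I would differentiate the just-derived relation to obtain $(f_{m}^{\textrm{MMF}\star})'(q_{m}^{\star})=B_{c}\Gamma_{1,m}\Gamma_{2,m}\big/\big(Z(2Z\Gamma_{2,m}+\Gamma_{1,m}-\Gamma_{2,m})\big)$, substitute $\mu_{m}=\lambda/(f_{m}^{\textrm{MMF}\star})'(q_{m}^{\star})$ into $\sum_{m}\mu_{m}=1$, and collect terms. Writing the sums $\sum_{m}1/\Gamma_{1,m}$ and $\sum_{m}1/\Gamma_{2,m}$, this reduces to the quadratic $2\lambda\big(\sum_{m}1/\Gamma_{1,m}\big)Z^{2}+\lambda\big(\sum_{m}1/\Gamma_{2,m}-\sum_{m}1/\Gamma_{1,m}\big)Z-B_{c}=0$, whose positive root is precisely $Z(\lambda)=X+\sqrt{X^{2}+B_{c}/(2\lambda\sum_{m}1/\Gamma_{1,m})}$ with $X$ as defined in the statement. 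The optimal $\lambda$ (and hence $Z$) is then pinned down by the active power constraint $\sum_{m}q_{m}^{\star}=P$, which completes the characterization.
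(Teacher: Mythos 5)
Correct, and essentially the same approach as the paper: both pass to the epigraph reformulation \eqref{max t}, apply the KKT conditions of the resulting convex problem, deduce that all per-channel constraints $f_{m}^{\textrm{MMF}\star}(q_{m})\ge t$ bind, and solve the same quadratic in $Z$ for its positive root before inverting to get $q_{m}^{\star}$. The only cosmetic difference is that the paper first rewrites $f_{m}^{\textrm{MMF}\star}(q_{m})\ge t$ as $q_{m}\ge(a^{t}\Gamma_{2,m}+\Gamma_{1,m})(a^{t}-1)/(\Gamma_{1,m}\Gamma_{2,m})$ so that $\partial L/\partial q_{m}=\mu_{m}-\lambda$ yields $\mu_{m}=\lambda$ immediately and the quadratic drops out of the $t$-stationarity, whereas you differentiate $f_{m}^{\textrm{MMF}\star}$ directly and recover the identical quadratic from $\sum_{m}\mu_{m}=1$ together with $\mu_{m}=\lambda/(f_{m}^{\textrm{MMF}\star})'(q_{m}^{\star})$; your version also makes explicit the positivity arguments ($\lambda>0$, $q_{m}^{\star}>0$, hence all $\mu_{m}>0$) that the paper leaves implicit.
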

\begin{proof} We first transform $\mathcal{OP}_{3}^{\textrm{MMF}}$
into
\begin{equation}
\begin{array}{cl}
\underset{\boldsymbol{q},t}{\max} & t\\
\textrm{s.t.} & \boldsymbol{q}\geq\boldsymbol{0,}\ \sum_{m=1}^{M}q_{m}\leq P,\ f_{m}^{\textrm{MMF}\star}(q_{m})\geq t,\forall m
\end{array}\label{max t}
\end{equation}
where $f_{m}^{\textrm{MMF}\star}(q_{m})\geq t$ is equivalent to $q_{m}\geq(a^{t}\Gamma_{2,m}+\Gamma_{1,m})(a^{t}-1)/(\Gamma_{1,m}\Gamma_{2,m})$
with $a=2^{1/B_{c}}$. Then, the Lagrange of (\ref{max t}) can be
written as
\begin{align*}
L= & t+\sum_{m=1}^{M}\mu_{m}\left[q_{m}-\frac{\left(a^{t}\Gamma_{2,m}+\Gamma_{1,m}\right)\left(a^{t}-1\right)}{\Gamma_{1,m}\Gamma_{2,m}}\right]\\
 & -\lambda\left(\sum_{m=1}^{M}q_{m}-P\right)
\end{align*}
where $\left\{ \mu_{m}\right\} _{m=1}^{M}$ and $\lambda$ are the
Lagrange multipliers. Since (\ref{max t}) is a convex optimization
problem, its optimal solution is characterized by the following Karush-Kuhn-Tucker
(KKT) conditions:
\begin{equation}
\frac{\partial L}{\partial q_{m}}=\mu_{m}-\lambda=0,\label{1}
\end{equation}
\begin{equation}
\!\!\frac{\partial L}{\partial t}\!=\!1\!-\!a^{2t}\!\sum_{m=1}^{M}\frac{2\mu_{m}\ln a}{\Gamma_{1,m}}+\!a^{t}\!\sum_{m=1}^{M}\frac{\mu_{m}\ln a}{\Gamma_{1,m}}-\!a^{t}\!\sum_{m=1}^{M}\frac{\mu_{m}\ln a}{\Gamma_{2,m}}\!=\!0,\label{2}
\end{equation}
\begin{equation}
\mu_{m}\left(q_{m}-\frac{\left(a^{t}\Gamma_{2,m}+\Gamma_{1,m}\right)\left(a^{t}-1\right)}{\Gamma_{1,m}\Gamma_{2,m}}\right)=0,\label{3}
\end{equation}
\[
\lambda\left(\sum_{m=1}^{M}q_{m}-P\right)=0.
\]
It follows from \eqref{1} and \eqref{2} that $\mu_{m}=\lambda\neq0$.
Then, \eqref{2} is equivalent to
\[
C_{1}a^{2t}-C_{2}a^{t}-1=0
\]
where $C_{1}=2\lambda\ln a\sum_{m=1}^{M}1/\Gamma_{1,m}$ and $C_{2}=\lambda\ln a\sum_{m=1}^{M}(\Gamma_{2,m}-\Gamma_{1,m})/(\Gamma_{1,m}\Gamma_{2,m})$.
By solving this quadratic equation, we obtain
\begin{align*}
a^{t} & =\frac{C_{2}}{2C_{1}}+\sqrt{\left(\frac{C_{2}}{2C_{1}}\right)^{2}+\frac{1}{C_{1}}}\\
 & =X+\sqrt{X^{2}+\frac{B_{c}}{2\lambda\sum_{m=1}^{M}1/\Gamma_{1,m}}}=Z\left(\lambda\right).
\end{align*}
Finally, from \eqref{3}, we have
\begin{align}
q_{m} & =\frac{\left(a^{t}\Gamma_{2,m}+\Gamma_{1,m}\right)\left(a^{t}-1\right)}{\Gamma_{1,m}\Gamma_{2,m}}\nonumber \\
 & =\frac{\left(Z\left(\lambda\right)\Gamma_{2,m}+\Gamma_{1,m}\right)\left(Z\left(\lambda\right)-1\right)}{\Gamma_{1,m}\Gamma_{2,m}}\geq0\label{optimal q_m}
\end{align}
which completes the proof. \end{proof}
\begin{cor}
\textcolor{black}{Under the MMF criterion, the optimal power allocation
achieves the absolute fairness for all the users on all channels,
i.e., $R_{1,m}=R_{2,m}=r$, $m=1,\ldots,M$, for some $r\geq0$.}
\end{cor}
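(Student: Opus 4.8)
The plan is to assemble the corollary from two facts that are already in place: the \emph{within-channel} rate equalization established in Proposition~\ref{P1 MMF}, and the \emph{across-channel} rate equalization that is implicit in the KKT system solved in the proof of Theorem~\ref{T1MMF}.

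First I would record the per-channel equality. By the Remark following Proposition~\ref{P1 MMF}, for any fixed budget $q_m$ the inner optimizer $p_{1,m}^{\star}=\Lambda_m$, $p_{2,m}^{\star}=q_m-\Lambda_m$ makes the two users on channel $m$ attain a common rate, namely $R_{1,m}=R_{2,m}=f_m^{\textrm{MMF}\star}(q_m)$ as in \eqref{R_1m*,R_2m*-MMF}. In particular, evaluating this at the optimal budgets $q_m^{\star}$ of Theorem~\ref{T1MMF} already yields $R_{1,m}=R_{2,m}=f_m^{\textrm{MMF}\star}(q_m^{\star})$ on every channel, so absolute fairness within each channel is automatic.

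It then remains to show that these per-channel rates do not depend on $m$. For this I would invoke complementary slackness for the convex reformulation \eqref{max t}. In the proof of Theorem~\ref{T1MMF} it is shown that $\mu_m=\lambda$ for all $m$; moreover $\lambda\neq0$, since $\lambda=0$ would force every $\mu_m=0$ and reduce \eqref{2} to the impossible identity $1=0$. Hence every multiplier $\mu_m$ is strictly positive, and \eqref{3} forces each rate constraint to be active, i.e.\ $f_m^{\textrm{MMF}\star}(q_m^{\star})=t$ for all $m$. Because the scalar $t$ (equivalently $a^{t}=Z(\lambda)$, so that $t=B_c\log Z(\lambda)$) is shared by all channels, every channel attains the same optimal rate.

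Setting $r:=t=B_c\log Z(\lambda)$ and noting $Z(\lambda)\geq1$, which is exactly what makes $q_m^{\star}\geq0$ in \eqref{optimal q_m MMF}, we obtain $r\geq0$ and $R_{1,m}=R_{2,m}=r$ for every $m=1,\ldots,M$, as claimed. I do not anticipate a genuine obstacle, since both ingredients are already in hand; the one point worth a sentence is the nonvanishing of $\lambda$, equivalently that the total power budget is exhausted at the optimum, which is precisely what guarantees all the per-channel constraints are active and the equalized rates therefore agree across channels.
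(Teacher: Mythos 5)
Your proof is correct and follows essentially the same route as the paper: the paper's (one-line) proof likewise combines the within-channel rate equalization $R_{1,m}=R_{2,m}=f_m^{\textrm{MMF}\star}(q_m)$ from Proposition~\ref{P1 MMF} with the observation that the optimal budgets of Theorem~\ref{T1MMF} satisfy $f_m^{\textrm{MMF}\star}(q_m^{\star})=t$ for every $m$, which is exactly the complementary-slackness argument ($\mu_m=\lambda\neq 0$ forcing each rate constraint in \eqref{max t} to be active) that you spell out. Your added remarks on the nonvanishing of $\lambda$ and on $Z(\lambda)\geq 1$ giving $r\geq 0$ are consistent with, and slightly more explicit than, what the paper states.
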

\begin{proof} \textcolor{black}{Given the optimal $q_{m}$ in (\ref{optimal q_m}),
it can be verified that $f_{m}^{\textrm{MMF}\star}(q_{m})=t$ for
$m=1,\ldots,M$. } \end{proof} The optimal power allocation under
the MMF criterion is fully characterized by Theorem \ref{T1MMF} and
Proposition \ref{P1 MMF}. It follows from (\ref{optimal q_m MMF})
that $q_{m}^{\star}$ is monotonically decreasing in $\lambda$, so
the optimal $\lambda$ satisfying $\sum_{m=1}^{M}q_{m}^{\star}=P$
can be efficiently found via a simple bisection method.

\section{\label{sec:WSR}Optimal Power Allocation for Sum Rate}

In this section, we seek the optimal power allocation for maximizing
the weighted sum rate (SR) or maximizing the SR with QoS constraints.

\subsection{\label{subsec:WSR1}Weighted SR Maximization (SR1)}

According to \eqref{max WSR}, with given channel assignment, the
problem of maximizing the weighted sum rate is equivalent to the following
power allocation problem:
\[
\mathcal{OP}_{1}^{\textrm{SR}1}:\begin{array}{cl}
\underset{\boldsymbol{p}_{1},\boldsymbol{p}_{2}}{\max} & \sum_{m=1}^{M}g(p_{1,m},p_{2,m})\\
\textrm{s.t.} & \boldsymbol{0}\leq\boldsymbol{p}_{1}\leq\boldsymbol{p}_{2},\ \sum_{m=1}^{M}\left(p_{1,m}+p_{2,m}\right)\leq P
\end{array}
\]
where $g(p_{1,m},p_{2,m})\triangleq W_{1,m}R_{1,m}(p_{1,m},p_{2,m})+W_{2,m}R_{2,m}(p_{1,m},p_{2,m}).$
As the objective of $\mathcal{OP}_{1}^{\textrm{SR}1}$ is not a concave
function, $\mathcal{OP}_{1}^{\textrm{SR}1}$ is also a nonconvex problem.
Although this problem has been studied in \cite{parida2014power,hojeij2015resource,sun2016optimal},
the solution is either suboptimal or needs exhaustive search.

Introduce auxiliary variables $\boldsymbol{q}=\{q_{m}\}_{m=1}^{M}$
that represent the power budgets on each channel $m$ with $p_{1,m}+p_{2,m}=q_{m}$.
Then, $\mathcal{OP}_{1}^{\textrm{SR}1}$ is decomposed into a group
of subproblems for each channel $m$:
\[
\mathcal{OP}_{2,m}^{\textrm{SR}1}:\begin{array}{cl}
\underset{p_{1,m},p_{2,m}}{\max} & g(p_{1,m},p_{2,m})\\
\textrm{s.t.} & 0\leq p_{1,m}\leq p_{2,m},\ p_{1,m}+p_{2,m}=q_{m}.
\end{array}
\]
We first solve the subproblem $\mathcal{OP}_{2,m}^{\textrm{SR}1}$
for each channel $m$. Note that $\mathcal{OP}_{2,m}^{\textrm{SR}1}$
is still a nonconvex problem due to the interference between $\textrm{U}\textrm{E}_{1,m}$
and $\textrm{U}\textrm{E}_{2,m}$. Nevertheless, its optimal solution
can be characterized in a closed form.
\begin{prop}
\label{P1 WSR1}Suppose that $\Gamma_{1,m}\geq\Gamma_{2,m}$, $1<W_{2,m}\text{/}W_{1,m}<\Gamma_{1,m}/\Gamma_{2,m}$
and $q_{m}>2\Omega_{m}$, with
\[
\Omega_{m}\triangleq\frac{W_{2,m}\Gamma_{2,m}-W_{1,m}\Gamma_{1,m}}{\Gamma_{1,m}\Gamma_{2,m}\left(W_{1,m}-W_{2,m}\right)}.
\]
Then, the optimal solution to $\mathcal{OP}_{2,m}^{\textrm{SR}1}$
is given by $p_{1,m}^{\star}=\Omega_{m}$ and $p_{2,m}^{\star}=q_{m}-p_{1,m}^{\star}$.
\end{prop}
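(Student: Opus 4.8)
The plan is to follow the same route as Proposition \ref{P1 MMF}. First I would use $p_{2,m}=q_m-p_{1,m}$ to turn $\mathcal{OP}_{2,m}^{\textrm{SR}1}$ into a one-dimensional problem over $p_{1,m}\in[0,q_m/2]$ (the order constraint $p_{1,m}\le p_{2,m}$ becomes $p_{1,m}\le q_m/2$). Substituting into $R_{1,m}$ and $R_{2,m}$ yields the same reduced rates as in Proposition \ref{P1 MMF}, so that
\[
g(p_{1,m}) = W_{1,m}B_c\log\!\left(1+p_{1,m}\Gamma_{1,m}\right) + W_{2,m}B_c\log\!\left(\frac{q_m\Gamma_{2,m}+1}{p_{1,m}\Gamma_{2,m}+1}\right).
\]
The term $W_{2,m}B_c\log(q_m\Gamma_{2,m}+1)$ is independent of $p_{1,m}$, so maximizing $g$ is equivalent to maximizing the difference of logarithms $\tilde{g}(p_{1,m})\triangleq W_{1,m}B_c\log(1+p_{1,m}\Gamma_{1,m})-W_{2,m}B_c\log(1+p_{1,m}\Gamma_{2,m})$.

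Next I would differentiate $\tilde g$. Over the common denominator $(1+p_{1,m}\Gamma_{1,m})(1+p_{1,m}\Gamma_{2,m})$, which is strictly positive for $p_{1,m}\ge 0$, the sign of $\tilde g'(p_{1,m})$ coincides with the sign of the affine numerator
\[
N(p_{1,m})\triangleq\left(W_{1,m}\Gamma_{1,m}-W_{2,m}\Gamma_{2,m}\right)+\Gamma_{1,m}\Gamma_{2,m}\left(W_{1,m}-W_{2,m}\right)p_{1,m}.
\]
Setting $N(p_{1,m})=0$ gives a single root, which simplifies to exactly $p_{1,m}=\Omega_m$; hence $\Omega_m$ is the only stationary point of $\tilde g$ on $[0,\infty)$.

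Then I would use the hypotheses to fix both the nature and the location of this point. The two-sided bound $1<W_{2,m}/W_{1,m}<\Gamma_{1,m}/\Gamma_{2,m}$ is equivalent to $W_{1,m}-W_{2,m}<0$ together with $W_{1,m}\Gamma_{1,m}-W_{2,m}\Gamma_{2,m}>0$. Thus $N$ has a positive intercept and a negative slope, so $N>0$ on $[0,\Omega_m)$ and $N<0$ on $(\Omega_m,\infty)$; consequently $\tilde g$ increases then decreases, and $\Omega_m$ is the global maximizer on $[0,\infty)$. The same two sign facts give $\Omega_m>0$, and the assumption $q_m>2\Omega_m$ gives $\Omega_m<q_m/2$, so $\Omega_m$ lies strictly inside the feasible interval $[0,q_m/2]$ and is therefore the constrained optimizer. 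Setting $p_{2,m}^{\star}=q_m-\Omega_m$ finishes the argument.

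The computation is otherwise routine; the only delicate point is the sign bookkeeping on $W_{1,m}-W_{2,m}$ and $W_{1,m}\Gamma_{1,m}-W_{2,m}\Gamma_{2,m}$, which is precisely where the two-sided weight-ratio condition enters. Indeed, these hypotheses are exactly the conditions that push the unconstrained optimum $\Omega_m$ strictly into the interior $(0,q_m/2)$: when the weight ratio or $q_m$ falls outside these ranges, the maximizer moves to a boundary ($p_{1,m}^{\star}=0$ or $p_{1,m}^{\star}=q_m/2$), and I would expect those regimes to be handled as separate cases.
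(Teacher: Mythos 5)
Your proposal is correct and follows essentially the same route as the paper: reduce to a one-dimensional problem in $p_{1,m}\in[0,q_{m}/2]$, locate the unique stationary point $\Omega_{m}$ of the reduced objective, and use the hypotheses to place it in the interior of the feasible interval. The only difference is cosmetic — you certify optimality by the sign change of the first derivative across $\Omega_{m}$ (which directly gives global optimality on the interval), whereas the paper evaluates the second derivative at $\Omega_{m}$; both rest on the same sign bookkeeping for $W_{1,m}-W_{2,m}$ and $W_{1,m}\Gamma_{1,m}-W_{2,m}\Gamma_{2,m}$.
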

\begin{proof} Since $p_{2,m}=q_{m}-p_{1,m}$, $p_{1,m}\leq p_{2,m}$
is equal to $p_{1,m}\leq q_{m}/2$, and the objective becomes
\begin{align*}
F(p_{1,m})\triangleq & W_{1,m}B_{c}\log\left(1+p_{1,m}\Gamma_{1,m}\right)\\
 & +W_{2,m}B_{c}\log\left(\frac{q_{m}\Gamma_{2,m}+1}{p_{1,m}\Gamma_{2,m}+1}\right).
\end{align*}
By setting the derivative of $F$ to zero, we have
\[
\frac{dF}{dp_{1,m}}=\frac{W_{1,m}B_{c}}{1/\Gamma_{1,m}+p_{1,m}}-\frac{W_{2,m}B_{c}}{1/\Gamma_{2,m}+p_{1,m}}=0,
\]
leading to a unique root $p_{1,m}=\Omega_{m}$, which satisfies the
constraint $p_{1,m}\leq q_{m}/2$ since $\Omega_{m}<q_{m}/2$. Given
$\Gamma_{1,m}\geq\Gamma_{2,m}$ and $1<W_{2,m}\text{/}W_{1,m}<\Gamma_{1,m}/\Gamma_{2,m}$, it follows
that
\begin{align*}
\frac{\partial^{2}F}{\partial p_{1,m}^{2}} & =\frac{B_{c}W_{2,m}}{\left(1/\Gamma_{2,m}+\Omega_{m}\right){}^{2}}-\frac{B_{c}W_{1,m}}{\left(1/\Gamma_{1,m}+\Omega_{m}\right){}^{2}}\\
 & \!\!\!\!=\frac{B_{c}\Gamma_{1,m}^{2}\Gamma_{2,m}^{2}\left(W_{1,m}-W_{2,m}\right)^{2}}{\left(\Gamma_{2,m}-\Gamma_{1,m}\right)^{2}}\!\left(\frac{1}{W_{2,m}}\!-\frac{1}{W_{1,m}}\right)\!\!<\!\!0,
\end{align*}
indicating that $\Omega_{m}$ is a maximizer. \end{proof}
\begin{rem}
In Proposition \ref{P1 WSR1}, the conditions $1<W_{2,m}\text{/}W_{1,m}<\Gamma_{1,m}/\Gamma_{2,m}$ and
$q_{m}>2\Omega_{m}$ are both to avoid a failure of SIC. Indeed, if
$W_{2,m}/W_{1,m}<1$ or $W_{2,m}/W_{1,m}>\Gamma_{1,m}/\Gamma_{2,m}$, the solution to $\mathcal{OP}_{2,m}^{\textrm{SR}1}$
is $p_{1,m}^{\star}=p_{2,m}^{\star}=q_{m}/2$ , i.e., the NOMA system
is unstable according to Definition \ref{SIC-stable}. SIC may also
fail on channel $m$ if $q_{m}\leq2\Omega_{m}$, which will lead to
$p_{1,m}^{\star}=p_{2,m}^{\star}=q_{m}/2$ too. Therefore, the NOMA
system is SIC-stable on channel $m$ if and only if $1<W_{2,m}\text{/}W_{1,m}<\Gamma_{1,m}/\Gamma_{2,m}$
and $q_{m}>2\Omega_{m}$. For all channels, we have the following
result.
\end{rem}
\begin{cor}
\label{C1 WSR1}For $\mathcal{OP}_{1}^{\textrm{SR}1}$, the NOMA system
is SIC-stable only if $P>2\sum_{m=1}^{M}\Omega_{m}$ and $1<W_{2,m}\text{/}W_{1,m}<\Gamma_{1,m}/\Gamma_{2,m}$
for $m=1,\ldots,M$.
\end{cor}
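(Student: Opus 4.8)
The plan is to reduce the claim to the per-channel SIC-stability characterization already established in the Remark following Proposition \ref{P1 WSR1}, and then aggregate those per-channel conditions across all $M$ channels by means of the total power budget. By Definition \ref{SIC-stable}, the NOMA system is SIC-stable for $\mathcal{OP}_{1}^{\textrm{SR}1}$ precisely when the optimal allocation obeys $p_{1,m}^{\star}<p_{2,m}^{\star}$ on every channel $m$, i.e., when the system is SIC-stable on each individual channel. So the task is to see what the optimal budgets and weights must satisfy for this to hold simultaneously for all $m$.

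First I would recall that, at the overall optimum of $\mathcal{OP}_{1}^{\textrm{SR}1}$, the optimal channel budgets $\{q_{m}^{\star}\}$ are fixed, and for each $m$ the pair $(p_{1,m}^{\star},p_{2,m}^{\star})$ solves the decomposed subproblem $\mathcal{OP}_{2,m}^{\textrm{SR}1}$ with $q_{m}=q_{m}^{\star}$. Hence the per-channel characterization applies verbatim at $q_{m}=q_{m}^{\star}$: SIC-stability on channel $m$ holds if and only if $1<W_{2,m}/W_{1,m}<\Gamma_{1,m}/\Gamma_{2,m}$ and $q_{m}^{\star}>2\Omega_{m}$. The weight conditions $1<W_{2,m}/W_{1,m}<\Gamma_{1,m}/\Gamma_{2,m}$ for all $m$ are then immediate necessary conditions for system-wide SIC-stability.

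It remains to extract the power condition. Since every feasible budget vector satisfies $\sum_{m=1}^{M}q_{m}^{\star}\leq P$, and system-wide SIC-stability forces $q_{m}^{\star}>2\Omega_{m}$ for each $m$, summing these $M$ strict inequalities gives
\[
P\ \geq\ \sum_{m=1}^{M}q_{m}^{\star}\ >\ 2\sum_{m=1}^{M}\Omega_{m},
\]
so $P>2\sum_{m=1}^{M}\Omega_{m}$ is necessary as well, which establishes all the stated conditions.

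The step I expect to be the only real (and minor) obstacle is justifying that the per-channel ``if and only if'' transfers to the globally optimal budgets $q_{m}^{\star}$ rather than to an arbitrarily prescribed $q_{m}$; this hinges on the decomposition of $\mathcal{OP}_{1}^{\textrm{SR}1}$ into the subproblems $\mathcal{OP}_{2,m}^{\textrm{SR}1}$, so that the optimal allocation on each channel is exactly the solution of $\mathcal{OP}_{2,m}^{\textrm{SR}1}$ evaluated at $q_{m}=q_{m}^{\star}$. Once that link is in place, the aggregation via the total power budget is elementary, and no convexity or KKT machinery is needed.
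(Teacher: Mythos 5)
Your proof is correct and follows essentially the same route as the paper, which states the corollary as an immediate consequence of the remark after Proposition \ref{P1 WSR1}: the per-channel SIC-stability characterization ($1<W_{2,m}/W_{1,m}<\Gamma_{1,m}/\Gamma_{2,m}$ and $q_{m}>2\Omega_{m}$) is applied at the optimal budgets and the power conditions are summed against $\sum_{m}q_{m}\leq P$. Your explicit justification that the optimal per-channel allocation solves $\mathcal{OP}_{2,m}^{\textrm{SR}1}$ at $q_{m}=q_{m}^{\star}$ is a detail the paper leaves implicit, but it is the right link and the aggregation step is exactly the intended argument.
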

Next, we further optimize the power budget $q_{m}$ for each channel
$m$. To guarantee that the NOMA system is SIC-stable, it is reasonable
to assume that $q_{m}\geq\varTheta_{m}>2\Omega_{m}$ and $P\geq\sum_{m=1}^{M}\varTheta_{m}$
for some positive $\varTheta_{m}$. Then, from $\mathcal{OP}_{1}^{\textrm{SR}1}$
and $\mathcal{OP}_{2,m}^{\textrm{SR}1}$, the corresponding power
budget optimization problem is given by
\[
\mathcal{OP}_{3}^{\textrm{SR}1}:\begin{array}{cl}
\underset{\boldsymbol{q}}{\max} & \sum_{m=1}^{M}f_{m}^{\textrm{SR}1\star}(q_{m})\\
\textrm{s.t.} & \sum_{m=1}^{M}q_{m}\leq P,\ q_{m}\geq\varTheta_{m},\;\forall m
\end{array}
\]
where $f_{m}^{\textrm{SR}1\star}(q_{m})$ is the optimal objective
value of $\mathcal{OP}_{2,m}^{\textrm{SR}1}$ and given by
\begin{align}
f_{m}^{\textrm{SR}1\star}(q_{m})= & W_{1,m}B_{c}\log\left(1+\Omega_{m}\Gamma_{1,m}\right)\nonumber \\
 & +W_{2,m}B_{c}\log\left(\frac{q_{m}\Gamma_{2,m}+1}{\Omega_{m}\Gamma_{2,m}+1}\right).\label{f(q)*WSR1}
\end{align}
It is easily seen that $f_{m}^{\textrm{SR}1\star}(q_{m})$ is a concave
function, so $\mathcal{OP}_{3}^{\textrm{SR}1}$ is a convex problem,
whose solution is provided in the following result.
\begin{thm}
\label{T1 WSR1}The optimal solution to $\mathcal{OP}_{3}^{\textrm{SR}1}$
is given by
\begin{equation}
q_{m}^{\star}=\left[\frac{W_{2,m}B_{c}}{\lambda}-\frac{1}{\Gamma_{2,m}}\right]_{\varTheta_{m}}^{\infty}\label{optimal q_m WSR1}
\end{equation}
where $\lambda$ is chosen such that $\sum_{m=1}^{M}q_{m}^{\star}=P$.
\end{thm}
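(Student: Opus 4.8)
The plan is to treat $\mathcal{OP}_3^{\textrm{SR}1}$ as the convex program it is (the text already notes that $f_m^{\textrm{SR}1\star}$ is concave and the constraints are linear) and to solve it through its KKT system, which is then both necessary and sufficient. The first thing I would exploit is that $\Omega_m$ depends only on the weights and CNRs, \emph{not} on $q_m$; consequently the leading term $W_{1,m}B_c\log(1+\Omega_m\Gamma_{1,m})$ of \eqref{f(q)*WSR1} is constant in the optimization variable, and only $W_{2,m}B_c\log(q_m\Gamma_{2,m}+1)$ contributes to the derivative. A short computation then gives
\[
\frac{df_m^{\textrm{SR}1\star}}{dq_m}=\frac{W_{2,m}B_c}{q_m+1/\Gamma_{2,m}},
\]
(any base-conversion constant from the logarithm being absorbed into the multiplier below), which is positive and strictly decreasing, reconfirming both the strict monotonicity and the concavity of the objective.

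Next I would form the Lagrangian
\[
L=\sum_{m=1}^M f_m^{\textrm{SR}1\star}(q_m)-\lambda\Big(\sum_{m=1}^M q_m-P\Big)+\sum_{m=1}^M\nu_m\big(q_m-\varTheta_m\big),
\]
with $\lambda\ge0$ attached to the total-power budget and $\nu_m\ge0$ to the lower bounds $q_m\ge\varTheta_m$. Stationarity reads $W_{2,m}B_c/(q_m+1/\Gamma_{2,m})-\lambda+\nu_m=0$ for every $m$, to be combined with the complementary-slackness relations $\nu_m(q_m-\varTheta_m)=0$ and $\lambda\big(\sum_m q_m-P\big)=0$.

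I would then resolve the two slackness cases. If the lower bound is inactive, $q_m>\varTheta_m$ forces $\nu_m=0$ and stationarity yields $q_m=W_{2,m}B_c/\lambda-1/\Gamma_{2,m}$; if it is active, $q_m=\varTheta_m$ and the admissibility $\nu_m\ge0$ is equivalent to $W_{2,m}B_c/\lambda-1/\Gamma_{2,m}\le\varTheta_m$. The two branches merge into the clipped waterfilling expression
\[
q_m^{\star}=\Big[\,W_{2,m}B_c/\lambda-1/\Gamma_{2,m}\,\Big]_{\varTheta_m}^{\infty},
\]
which is precisely \eqref{optimal q_m WSR1}. Because the objective is strictly increasing in each $q_m$, the budget constraint is tight, so $\lambda>0$ and is fixed by $\sum_{m=1}^M q_m^{\star}=P$.

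Since the computation is essentially waterfilling, no single step is a genuine obstacle; the only point deserving care is the existence and uniqueness of $\lambda$. Here I would note that $q_m^{\star}(\lambda)$ is continuous and nonincreasing in $\lambda$, decreasing from $+\infty$ as $\lambda\to0^+$ to $\varTheta_m$ as $\lambda\to\infty$, so that $\sum_m q_m^{\star}(\lambda)$ sweeps the interval $[\sum_m\varTheta_m,\infty)$ continuously. Combined with the standing assumption $P\ge\sum_m\varTheta_m$, this guarantees a unique $\lambda>0$ with $\sum_m q_m^{\star}(\lambda)=P$, completing the characterization.
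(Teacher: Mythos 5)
Your proposal is correct and follows exactly the route the paper intends: the paper's proof simply asserts that the solution ``is given by the well-known waterfilling form,'' and your KKT derivation (using the fact that $\Omega_m$ is independent of $q_m$, so only the $W_{2,m}B_c\log(q_m\Gamma_{2,m}+1)$ term drives the stationarity condition) is precisely the standard argument being invoked. Your additional remarks on the existence and uniqueness of $\lambda$ under $P\ge\sum_m\varTheta_m$ are a welcome completion of a detail the paper leaves implicit.
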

\begin{proof} The solution of $\mathcal{OP}_{3}^{\textrm{SR}1}$
is given by the well-known waterfilling form. \end{proof} Consequently,
the optimal power allocation for the sum rate maximization with weights
in NOMA systems is jointly characterized by Theorem \ref{T1 WSR1}
and Proposition \ref{P1 WSR1} under the SIC-stability.

\subsection{\label{subsec:WSR2}SR Maximization with QoS (SR2)}

Now, we consider maximizing the SR with QoS constraints. According
to \eqref{max QSR}, in this case the power allocation problem is
given by
\[
\mathcal{OP}_{1}^{\textrm{SR2}}\!\!:\!\!\begin{array}{cl}
\underset{\boldsymbol{p}_{1},\boldsymbol{p}_{2}}{\max}\!\!\!\! & \sum_{m=1}^{M}\left(R_{1,m}(p_{1,m},p_{2,m})+R_{2,m}(p_{1,m},p_{2,m})\right)\\
\textrm{s.t.} & \boldsymbol{0}\leq\boldsymbol{p}_{1}\leq\boldsymbol{p}_{2},\ \sum_{m=1}^{M}(p_{1,m}+p_{2,m})\leq P,\\
 & R_{n,m}\geq R_{n,m}^{\min},\ n=1,2,\ m=1,\ldots,M.
\end{array}
\]
As a special case of $\mathcal{OP}_{1}^{\textrm{SR2}}$, \cite{wang2016power}
studied the power allocation for one channel. Thus, $\mathcal{OP}_{1}^{\textrm{SR2}}$
is still an open problem and its optimal solution is unknown yet.

We use the similar method to address $\mathcal{OP}_{1}^{\textrm{SR2}}$.
By introducing the power budget $q_{m}$ on each channel $m$, $\mathcal{OP}_{1}^{\textrm{SR2}}$
decomposes into the following subproblems for each channel $m$:
\[
\mathcal{OP}_{2,m}^{\textrm{SR2}}:\begin{array}{cl}
\underset{p_{1,m},p_{2,m}}{\max} & R_{1,m}(p_{1,m},p_{2,m})+R_{2,m}(p_{1,m},p_{2,m})\\
\textrm{s.t.} & 0\leq p_{1,m}\leq p_{2,m},\ p_{1,m}+p_{2,m}=q_{m},\\
 & R_{1,m}\geq R_{1,m}^{\min},\ R_{2,m}\geq R_{2,m}^{\min}.
\end{array}
\]
The optimal solution to $\mathcal{OP}_{2,m}^{\textrm{SR2}}$, although
it is nonconvex, is provided in the following result.
\begin{prop}
\label{P1 SR}Suppose that $\Gamma_{1,m}\geq\Gamma_{2,m}$, $A_{2,m}\geq2$,
and $q_{m}\geq\varUpsilon_{m}$, with
\[
\begin{array}{l}
A_{l,m}=2^{\frac{R_{l,m}^{\min}}{B_{c}}},\quad\varUpsilon_{m}\triangleq\frac{A_{2,m}(A_{1,m}-1)}{\Gamma_{1,m}}+\frac{A_{2,m}-1}{\Gamma_{2,m}},\\
\Xi_{m}\triangleq\frac{\Gamma_{2,m}q_{m}-A_{2,m}+1}{A_{2,m}\Gamma_{2,m}}.
\end{array}
\]
Then, the optimal solution to $\mathcal{OP}_{2,m}^{\textrm{SR2}}$
is given by $p_{1,m}^{\star}=\Xi_{m}$ and $p_{2,m}^{\star}=q_{m}-p_{1,m}^{\star}$.
\end{prop}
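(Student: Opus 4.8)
The plan is to follow the same reduction used in Propositions \ref{P1 MMF} and \ref{P1 WSR1}: eliminate $p_{2,m}$ through the equality $p_{2,m}=q_{m}-p_{1,m}$ and reduce $\mathcal{OP}_{2,m}^{\textrm{SR2}}$ to a single-variable problem in $x\triangleq p_{1,m}$. Substituting into the two rates gives $R_{1,m}(x)=B_{c}\log(1+x\Gamma_{1,m})$ and $R_{2,m}(x)=B_{c}\log\bigl((q_{m}\Gamma_{2,m}+1)/(x\Gamma_{2,m}+1)\bigr)$, so the objective becomes
\[
R_{1,m}(x)+R_{2,m}(x)=B_{c}\log\!\left[\frac{(1+x\Gamma_{1,m})(q_{m}\Gamma_{2,m}+1)}{1+x\Gamma_{2,m}}\right].
\]
Since the factor $q_{m}\Gamma_{2,m}+1$ does not depend on $x$, maximizing the sum rate is equivalent to maximizing $\phi(x)\triangleq(1+x\Gamma_{1,m})/(1+x\Gamma_{2,m})$.

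The key observation, which makes this case cleaner than the weighted one, is that $\phi$ is monotone rather than having an interior critical point. Indeed, $\phi'(x)=(\Gamma_{1,m}-\Gamma_{2,m})/(1+x\Gamma_{2,m})^{2}\geq 0$ under the hypothesis $\Gamma_{1,m}\geq\Gamma_{2,m}$, so the sum rate is nondecreasing in $x$. Hence the maximizer is attained at the largest feasible value of $x$, and the whole problem collapses to identifying the tightest upper bound on $x$ imposed by the constraints.

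Next I would translate each constraint into a bound on $x$. The power-order constraint $p_{1,m}\leq p_{2,m}$ becomes $x\leq q_{m}/2$; the QoS constraint $R_{1,m}\geq R_{1,m}^{\min}$ becomes the lower bound $x\geq(A_{1,m}-1)/\Gamma_{1,m}$; and rearranging $R_{2,m}\geq R_{2,m}^{\min}$ yields the upper bound $x\leq\Xi_{m}$. The feasible interval is therefore $\bigl[(A_{1,m}-1)/\Gamma_{1,m},\,\min\{q_{m}/2,\Xi_{m}\}\bigr]$, and by monotonicity the optimal point is its right endpoint. It then remains to show that, under the stated hypotheses, this endpoint is exactly $\Xi_{m}$.

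The main obstacle, and really the only substantive step, is verifying that the two side conditions pin down the endpoint correctly. First, $A_{2,m}\geq 2$ should force $\Xi_{m}\leq q_{m}/2$, so that the user-$2$ QoS constraint binds rather than the power-order constraint (the latter would give $p_{1,m}^{\star}=p_{2,m}^{\star}=q_{m}/2$ and destroy SIC-stability). I would clear denominators in $\Xi_{m}\leq q_{m}/2$, reducing it to $\Gamma_{2,m}q_{m}(2-A_{2,m})\leq 2(A_{2,m}-1)$; when $A_{2,m}\geq 2$ the left side is nonpositive while the right side is positive, so the inequality holds. Second, $q_{m}\geq\varUpsilon_{m}$ should guarantee the feasible interval is nonempty, i.e. the lower bound does not exceed $\Xi_{m}$; substituting and clearing denominators in $(A_{1,m}-1)/\Gamma_{1,m}\leq\Xi_{m}$ reproduces exactly $q_{m}\geq A_{2,m}(A_{1,m}-1)/\Gamma_{1,m}+(A_{2,m}-1)/\Gamma_{2,m}=\varUpsilon_{m}$. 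With both checks in place, the maximizer is $p_{1,m}^{\star}=\Xi_{m}$ and $p_{2,m}^{\star}=q_{m}-\Xi_{m}$, which completes the proof.
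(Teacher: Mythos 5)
Your proposal is correct and follows essentially the same route as the paper's proof: eliminate $p_{2,m}$, show the single-variable objective is nondecreasing (your ratio $\phi(x)$ is just a repackaging of the paper's derivative computation $dT/dp_{1,m}\geq 0$), and then identify $\Xi_{m}$ as the binding upper bound using $A_{2,m}\geq 2$ and $q_{m}\geq\varUpsilon_{m}$. If anything, your verification that $A_{2,m}\geq 2$ implies $\Xi_{m}\leq q_{m}/2$ is spelled out more explicitly than in the paper, which simply asserts it.
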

\begin{proof} Since $p_{2,m}=q_{m}-p_{1,m}$, $p_{1,m}\leq p_{2,m}$
is equal to $p_{1,m}\leq q_{m}/2$ and the objective becomes
\[
T(p_{1,m})\triangleq B_{c}\log\left(1+p_{1,m}\Gamma_{1,m}\right)+B_{c}\log\left(\frac{q_{m}\Gamma_{2,m}+1}{p_{1,m}\Gamma_{2,m}+1}\right).
\]
Given $\Gamma_{1,m}\geq\Gamma_{2,m}$, we take the derivative of $T(p_{1,m})$
and have
\[
\frac{dT}{dp_{1,m}}=\frac{B_{c}}{1/\Gamma_{1,m}+p_{1,m}}-\frac{B_{c}}{1/\Gamma_{2,m}+p_{1,m}}\geq0,
\]
implying that $T(p_{1,m})$ is monotonically nondecreasing, so the
maximum is achieved at the upper bound of $p_{1,m}$. From $R_{1,m}\geq R_{1,m}^{\min}$
and $R_{2,m}\geq R_{2,m}^{\min}$, we obtain
\[
\frac{A_{1,m}-1}{\Gamma_{1,m}}\leq p_{1,m}\leq\Xi_{m}
\]
which holds if and only if $(A_{1,m}-1)/\Gamma_{1,m}\leq\Xi_{m}$,
i.e., $q_{m}\geq\varUpsilon_{m}$. Finally, since $A_{2,m}\geq2$,
$p_{1,m}=\Xi_{m}<q_{m}/2$ holds. Thus the optimal solution is $p_{1,m}^{\star}=\Xi_{m}$.
\end{proof}
\begin{rem}
Similarly, in Proposition \ref{P1 SR}, the conditions $A_{2,m}\geq2$
and $q_{m}\geq\varUpsilon_{m}$ are to guarantee the SIC-stability.
Indeed, if $A_{2,m}<2$, then $\Xi_{m}>q_{m}/2$ and the optimal solution
will be $p_{1,m}^{\star}=p_{2,m}^{\star}=q_{m}/2$, which may leads
a failure of SIC. At the same time, SIC may also fail on channel $m$
if $q_{m}<\varUpsilon_{m}$, which will lead to $p_{1,m}^{\star}=p_{2,m}^{\star}=q_{m}/2$
as well. Therefore, the NOMA system is SIC-stable on channel $m$
if and only if $A_{2,m}\geq2$ and $q_{m}\geq\varUpsilon_{m}$. For
all the channels, we have the following result.
\end{rem}
\begin{cor}
\label{C1 WSR2}For $\mathcal{OP}_{1}^{\textrm{SR2}}$, the NOMA system
is SIC-stable only if $P\geq\sum_{m=1}^{M}\varUpsilon_{m}$ and $A_{2,m}\geq2$
for $m=1,\ldots,M$.
\end{cor}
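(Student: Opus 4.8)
The plan is to derive the two stated conditions as necessary consequences of global SIC-stability by reducing the multi-channel problem to the per-channel characterization already obtained. First I would invoke Definition~\ref{SIC-stable}: if the NOMA system is SIC-stable for $\mathcal{OP}_{1}^{\textrm{SR2}}$, then the globally optimal power allocation $\{p_{1,m}^{\star},p_{2,m}^{\star}\}_{m=1}^{M}$ satisfies $p_{1,m}^{\star}<p_{2,m}^{\star}$ for every channel $m$.

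The next step is the decomposition argument. Let $q_{m}^{\star}\triangleq p_{1,m}^{\star}+p_{2,m}^{\star}$ be the power budget that the global optimum induces on channel $m$; these satisfy $\sum_{m=1}^{M}q_{m}^{\star}\leq P$. Since the objective and constraints of $\mathcal{OP}_{1}^{\textrm{SR2}}$ separate across channels once the budgets are fixed, the restriction $(p_{1,m}^{\star},p_{2,m}^{\star})$ must itself be optimal for the subproblem $\mathcal{OP}_{2,m}^{\textrm{SR2}}$ with $q_{m}=q_{m}^{\star}$; otherwise one could raise the sum rate on channel $m$ without altering the budget, contradicting global optimality. Hence each channel's optimal solution is SIC-stable at the budget $q_{m}^{\star}$.

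Now I would apply the per-channel characterization of Proposition~\ref{P1 SR} and the preceding remark, which states that $\mathcal{OP}_{2,m}^{\textrm{SR2}}$ admits a SIC-stable optimizer if and only if $A_{2,m}\geq 2$ and $q_{m}\geq\varUpsilon_{m}$. Since each channel is SIC-stable at $q_{m}^{\star}$, this forces $A_{2,m}\geq 2$ for every $m$ and $q_{m}^{\star}\geq\varUpsilon_{m}$ for every $m$. Summing the latter over $m$ and using the budget constraint yields $P\geq\sum_{m=1}^{M}q_{m}^{\star}\geq\sum_{m=1}^{M}\varUpsilon_{m}$, which are exactly the claimed necessary conditions.

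I expect the only genuine subtlety to be the restriction-optimality step, i.e.\ establishing that global SIC-stability implies per-channel SIC-stability at the induced budgets. This rests on the separable structure exploited throughout Section~\ref{sec:WSR} and is essentially the same decomposition principle already used to pass from $\mathcal{OP}_{1}^{\textrm{SR2}}$ to $\mathcal{OP}_{2,m}^{\textrm{SR2}}$, so it should require only a brief optimality remark rather than new machinery. One edge case worth a line is the degenerate $\Gamma_{1,m}=\Gamma_{2,m}$, where the per-channel maximizer need not be unique; under the standing assumption $\Gamma_{1,m}\geq\Gamma_{2,m}$ with generic strict ordering this does not arise, and in any case it does not affect the necessity direction asserted here.
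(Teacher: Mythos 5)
Your proof is correct and follows essentially the same route as the paper, which states the corollary as an immediate consequence of the per-channel ``if and only if'' characterization in the remark following Proposition~\ref{P1 SR}: summing the necessary per-channel conditions $q_{m}\geq\varUpsilon_{m}$ over $m$ and using $\sum_{m}q_{m}\leq P$ gives the claim. Your added restriction-optimality step (that a global optimum must be per-channel optimal at the induced budgets $q_{m}^{\star}$) is a worthwhile explicit justification of the decomposition the paper leaves implicit.
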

\begin{rem}
According to Proposition \ref{P1 SR}, if the NOMA system is SIC-stable,
the optimal solution will be $p_{1,m}^{\star}=\Xi_{m}$ and $p_{2,m}^{\star}=q_{m}-p_{1,m}^{\star}$.
Hence, we have $R_{2,m}(p_{1,m}^{\star},p_{2,m}^{\star})=R_{2,m}^{\min}$,
implying that the user with a lower CNR (i.e., $\textrm{U}\textrm{E}_{2,m}$)
receives the power to meet its QoS requirement exactly, while the
remaining power is used to maximize the rate of the user with a higher
CNR (i.e., $\textrm{U}\textrm{E}_{1,m}$).
\end{rem}
Then, we focus on optimizing the power budget $q_{m}$ for each channel.
Similarly, to guarantee the NOMA system is SIC-stable, we assume that
$q_{m}\geq\varUpsilon_{m}$ and $P\geq\sum_{m=1}^{M}\varUpsilon_{m}$.
According to $\mathcal{OP}_{1}^{\textrm{SR}2}$ and $\mathcal{OP}_{2,m}^{\textrm{SR2}}$,
the corresponding power budget optimization problem is as follows
\[
\mathcal{OP}_{3}^{\textrm{SR2}}:\begin{array}{c}
\underset{\boldsymbol{q}}{\max}\\
\textrm{s.t.}
\end{array}\begin{array}{l}
\sum_{m=1}^{M}f_{m}^{\textrm{SR2}\star}(q_{m})\\
\sum_{m=1}^{M}q_{m}\leq P,\ q_{m}\geq\varUpsilon_{m},\;\forall m
\end{array}
\]
where $f_{m}^{\textrm{SR2}\star}(q_{m})$ is the optimal objective
value of $\mathcal{OP}_{2,m}^{\textrm{SR2}}$ and given by
\begin{equation}
f_{m}^{\textrm{SR2}\star}(q_{m})=w(q_{m})+R_{2,m}^{\min}\label{f*SR2}
\end{equation}
where $w(q_{m})=B_{c}\log\frac{\left(A_{2,m}\Gamma_{2,m}-A_{2,m}\Gamma_{1,m}+\Gamma_{1,m}\Gamma_{2,m}q_{m}+\Gamma_{1,m}\right)}{A_{2,m}\Gamma_{2,m}}$.
Since $f_{m}^{\textrm{SR2}\star}(q_{m})$ is a concave function, $\mathcal{OP}_{3}^{\textrm{SR2}}$
is a convex problem, whose solution is also given in a waterfilling
form.
\begin{thm}
\label{T1 WSR2}The optimal solution to $\mathcal{OP}_{3}^{\textrm{SR2}}$
is given by
\[
q_{m}^{\star}=\left[\frac{B_{c}}{\textrm{\ensuremath{\lambda}}}-\frac{A_{2,m}}{\Gamma_{1,m}}+\frac{A_{2,m}}{\Gamma_{2,m}}-\frac{1}{\Gamma_{2,m}}\right]_{\varUpsilon_{m}}^{\infty}
\]
where $\lambda$ is chosen such that $\sum_{m=1}^{M}q_{m}^{\star}=P$.
\end{thm}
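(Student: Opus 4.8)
The plan is to solve the convex problem $\mathcal{OP}_{3}^{\textrm{SR2}}$ by the same route already used for Theorem~\ref{T1 WSR1}: form the Lagrangian, write the KKT conditions, and recognize a waterfilling structure. First I would note, from Lemma-type reasoning preceding the theorem, that $f_{m}^{\textrm{SR2}\star}(q_{m})=w(q_{m})+R_{2,m}^{\min}$ is concave and differentiable on $q_{m}\geq\varUpsilon_{m}$, so $\mathcal{OP}_{3}^{\textrm{SR2}}$ maximizes a separable concave objective over a simplex-type feasible set defined by the total power budget $\sum_{m}q_{m}\leq P$ together with the per-channel lower bounds $q_{m}\geq\varUpsilon_{m}$. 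Slater's condition holds under the assumption $P\geq\sum_{m=1}^{M}\varUpsilon_{m}$, so the KKT conditions are necessary and sufficient for optimality.

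Next I would differentiate $w(q_{m})$ with respect to $q_{m}$. Since $w(q_{m})=B_{c}\log\!\bigl(A_{2,m}\Gamma_{2,m}-A_{2,m}\Gamma_{1,m}+\Gamma_{1,m}\Gamma_{2,m}q_{m}+\Gamma_{1,m}\bigr)-B_{c}\log(A_{2,m}\Gamma_{2,m})$, its derivative is
\[
\frac{dw}{dq_{m}}=\frac{B_{c}\,\Gamma_{1,m}\Gamma_{2,m}}{A_{2,m}\Gamma_{2,m}-A_{2,m}\Gamma_{1,m}+\Gamma_{1,m}\Gamma_{2,m}q_{m}+\Gamma_{1,m}}.
\]
The stationarity condition of the Lagrangian, ignoring for a moment the lower-bound constraints, sets this derivative equal to the power-budget multiplier $\lambda$; solving the resulting linear equation for $q_{m}$ gives
\[
q_{m}=\frac{B_{c}}{\lambda}-\frac{A_{2,m}}{\Gamma_{1,m}}+\frac{A_{2,m}}{\Gamma_{2,m}}-\frac{1}{\Gamma_{2,m}},
\]
which is exactly the unclipped expression in the theorem. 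I would then reintroduce the per-channel constraints $q_{m}\geq\varUpsilon_{m}$ with their multipliers and invoke complementary slackness: when the interior stationary point falls below $\varUpsilon_{m}$, the lower-bound constraint is active and $q_{m}^{\star}=\varUpsilon_{m}$; otherwise the interior value is taken. This is precisely the meaning of the clipping operator $[\,\cdot\,]_{\varUpsilon_{m}}^{\infty}$, yielding the stated waterfilling form. Finally, since each $q_{m}^{\star}$ is nonincreasing in $\lambda$, monotonicity guarantees a unique $\lambda$ enforcing $\sum_{m=1}^{M}q_{m}^{\star}=P$, and I would note it is found by a simple bisection.

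I expect no genuine obstacle here, as the structure is essentially identical to Theorem~\ref{T1 WSR1}; the proof in the paper indeed just remarks that the solution is of waterfilling form. The only point requiring mild care is the algebra of $dw/dq_{m}$ and the bookkeeping of the constant $-A_{2,m}/\Gamma_{1,m}+A_{2,m}/\Gamma_{2,m}-1/\Gamma_{2,m}$, which must match the marginal-rate expression exactly for the clipped formula to come out in the claimed form. Accordingly, I would keep the proof brief, asserting concavity and the waterfilling characterization and pointing to the KKT derivation as the routine justification.
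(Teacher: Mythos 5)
Your proposal is correct and is exactly the argument the paper intends but omits ("the proof is simple and thus omitted"): concavity of the separable objective, KKT stationarity $dw/dq_{m}=\lambda$ giving $q_{m}=B_{c}/\lambda-A_{2,m}/\Gamma_{1,m}+A_{2,m}/\Gamma_{2,m}-1/\Gamma_{2,m}$, and complementary slackness on the lower bounds $q_{m}\geq\varUpsilon_{m}$ producing the clipped waterfilling form, with $\lambda$ fixed by the power budget via monotonicity. The algebra of $dw/dq_{m}$ checks out against the stated constant, so nothing further is needed.
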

\begin{proof} The proof is simple and thus omitted. \end{proof}
Therefore, the optimal power allocation for the SR maximization with
QoS constraints in NOMA systems is jointly characterized by Proposition
\ref{P1 SR} and Theorem \ref{T1 WSR2}. Note that, unlike the MMF
criterion, for the SR maximization with weights or QoS constraints,
NOMA systems are not always SIC-stable but have to satisfy some conditions
on the weights, power budgets, and QoS thresholds as indicated in
this section.

\section{\label{sec:GEE}Optimal Power Allocation for Energy Efficiency}

In this section, we investigate the optimal power allocation for maximizing
the energy efficiency (EE) of the NOMA systems with weights or QoS
constraints.

\subsection{EE Maximization with Weights (EE1)}

According to (\ref{max WEE}), with given channel assignment, the
problem of maximizing the EE with weights is equivalent to the following
power allocation problem:
\[
\mathcal{OP}_{1}^{\textrm{EE}1}:\begin{array}{ll}
\underset{\boldsymbol{p_{1},p_{2}}}{\max} & \frac{\sum_{m=1}^{M}g(p_{1,m},p_{2,m})}{P_{T}+\sum_{m=1}^{M}\left(p_{1,m}+p_{2,m}\right)}\\
\textrm{s.t.} & \boldsymbol{0}\leq\boldsymbol{p}_{1}\leq\boldsymbol{p}_{2},\ \sum_{m=1}^{M}\left(p_{1,m}+p_{2,m}\right)\leq P
\end{array}
\]
where $g(p_{1,m},p_{2,m})\triangleq W_{1,m}R_{1,m}(p_{1,m},p_{2,m})+W_{2,m}R_{2,m}(p_{1,m},p_{2,m}).$
The difficulties in solving $\mathcal{OP}_{1}^{\textrm{EE}1}$ lie
in its nonconvex and fractional objective. In the literature, only
\cite{zhang2016energy,fang2016energy} investigated this problem,
whereas \cite{zhang2016energy} only found the optimal solution in
the special case $M=1$, i.e., a single channel, and \cite{fang2016energy}
obtained a suboptimal power allocation solution. In the following,
we will show that this problem can also be optimally solved.

We use the similar trick to address this problem, i.e., introducing
the auxiliary variables $\left\{ q_{m}\right\} _{m=1}^{M}$ with $p_{1,m}+p_{2,m}=q_{m}$
for each channel $m$. Then, $\mathcal{OP}_{1}^{\textrm{EE}1}$ is
decomposed into the following subproblems for each channel $m$:
\[
\mathcal{OP}_{2,m}^{\textrm{EE}1}:\begin{array}{cl}
\begin{array}{c}
\underset{p_{1,m},p_{2,m}}{\max}\end{array} & \frac{g(p_{1,m},p_{2,m})}{P_{T}+\sum_{k=1}^{M}q_{k}}\\
\textrm{s.t.} & 0\leq p_{1,m}\leq p_{2,m},\ p_{1,m}+p_{2,m}=q_{m}
\end{array}
\]
whose optimal solution is provided in the following.
\begin{prop}
\label{P1-EE1}Suppose that $\Gamma_{1,m}\geq\Gamma_{2,m}$, $1<W_{2,m}\text{/}W_{1,m}<\Gamma_{1,m}/\Gamma_{2,m}$
and $q_{m}>2\Omega_{m}$, with
\[
\Omega_{m}\triangleq\frac{W_{2,m}\Gamma_{2,m}-W_{1,m}\Gamma_{1,m}}{\Gamma_{1,m}\Gamma_{2,m}\left(W_{1,m}-W_{2,m}\right)}.
\]
Then the optimal solution to $\mathcal{OP}_{2,m}^{\textrm{EE}1}$
is same with $\mathcal{OP}_{2,m}^{\textrm{SR}1}$, i.e., $p_{1,m}^{\star}=\Omega_{m}$
and $p_{2,m}^{\star}=q_{m}-p_{1,m}^{\star}$.
\end{prop}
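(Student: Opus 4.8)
The plan is to exploit the fact that, in the subproblem $\mathcal{OP}_{2,m}^{\textrm{EE}1}$, the denominator of the objective is \emph{constant} with respect to the actual optimization variables. First I would observe that the power budgets $\left\{q_{k}\right\}_{k=1}^{M}$ are fixed throughout the subproblem, so the term $P_{T}+\sum_{k=1}^{M}q_{k}$ does not depend on how the fixed budget $q_{m}$ is split into $p_{1,m}$ and $p_{2,m}$. Since this term is a strictly positive constant, maximizing the ratio $g(p_{1,m},p_{2,m})/(P_{T}+\sum_{k=1}^{M}q_{k})$ over $(p_{1,m},p_{2,m})$ is equivalent to maximizing the numerator $g(p_{1,m},p_{2,m})$ alone.

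Next I would note that maximizing $g(p_{1,m},p_{2,m})$ subject to $0\leq p_{1,m}\leq p_{2,m}$ and $p_{1,m}+p_{2,m}=q_{m}$ is precisely the subproblem $\mathcal{OP}_{2,m}^{\textrm{SR}1}$ solved earlier. Moreover, the hypotheses of Proposition \ref{P1-EE1}, namely $\Gamma_{1,m}\geq\Gamma_{2,m}$, $1<W_{2,m}/W_{1,m}<\Gamma_{1,m}/\Gamma_{2,m}$, and $q_{m}>2\Omega_{m}$, are identical to those of Proposition \ref{P1 WSR1}. Applying Proposition \ref{P1 WSR1} then immediately yields the same maximizer $p_{1,m}^{\star}=\Omega_{m}$ and $p_{2,m}^{\star}=q_{m}-p_{1,m}^{\star}$, which completes the argument.

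There is no genuine analytical obstacle here; the whole content of the statement is the observation that the energy-efficiency denominator is invariant to the intra-channel power split once $q_{m}$ is fixed, which collapses the fractional objective onto the already-solved weighted sum-rate numerator. The only point requiring (trivial) verification is that the feasible set of $\mathcal{OP}_{2,m}^{\textrm{EE}1}$ coincides with that of $\mathcal{OP}_{2,m}^{\textrm{SR}1}$, which is immediate since both impose exactly $0\leq p_{1,m}\leq p_{2,m}$ and $p_{1,m}+p_{2,m}=q_{m}$. I would therefore keep the proof to one or two lines, simply invoking Proposition \ref{P1 WSR1} after stating the constant-denominator equivalence.
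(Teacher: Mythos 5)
Your proof is correct and matches the paper's reasoning: the paper gives no separate proof for this proposition, but the remark immediately following it justifies the result by exactly your observation, namely that with the channel power budgets $\{q_{k}\}_{k=1}^{M}$ fixed the denominator $P_{T}+\sum_{k=1}^{M}q_{k}$ is a positive constant, so $\mathcal{OP}_{2,m}^{\textrm{EE}1}$ is equivalent to $\mathcal{OP}_{2,m}^{\textrm{SR}1}$ and Proposition \ref{P1 WSR1} applies directly.
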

\begin{rem}
It is not difficult to see that with given channel power budgets $\{q_{m}\}_{m=1}^{M}$,
$\mathcal{OP}_{2,m}^{\textrm{EE}1}$ is actually equivalent to $\mathcal{OP}_{2,m}^{\textrm{SR}1}$,
so they have the same optimal solution. Therefore, we obtain the same
SIC-stability conditions for $\mathcal{OP}_{2,m}^{\textrm{EE}1}$:
the NOMA system is SIC-stable on channel $m$ if and only if $q_{m}>2\Omega_{m}$
and $1<W_{2,m}\text{/}W_{1,m}<\Gamma_{1,m}/\Gamma_{2,m}$, and is SIC-stable on all channels only if
$P>2\sum_{m=1}^{M}\Omega_{m}$ and $1<W_{2,m}\text{/}W_{1,m}<\Gamma_{1,m}/\Gamma_{2,m}$ for $m=1,\ldots,M$.
\end{rem}
Then, we concentrate on searching the optimal power budget $q_{m}$
for each channel. Similarly, to guarantee the NOMA system is SIC-stable,
it is assumed that $q_{m}\geq\varTheta_{m}>2\Omega_{m}$ and $P\geq\sum_{m=1}^{M}\varTheta_{m}$
for some positive $\varTheta_{m}$. According to Proposition \ref{P1-EE1},
$\mathcal{OP}_{1}^{\textrm{EE}1}$, and $\mathcal{OP}_{2,m}^{\textrm{EE}1}$,
the power budget optimization problem is formulated as
\[
\mathcal{OP}_{3}^{\textrm{EE}1}:\begin{array}{cl}
\underset{\boldsymbol{q}}{\max} & \eta(\boldsymbol{q})\triangleq\frac{\sum_{m=1}^{M}f_{m}^{\textrm{SR}1\star}(q_{m})}{P_{T}+\sum_{m=1}^{M}q_{m}}\\
\textrm{s.t.} & \sum_{m=1}^{M}q_{m}\leq P,\ q_{m}\geq\varTheta_{m},\;\forall m
\end{array}
\]
where $f_{m}^{\textrm{SR}1\star}(q_{m})$ is the optimal value of
$\mathcal{OP}_{2,m}^{\textrm{SR}1}$ and given in \eqref{f(q)*WSR1}.
Although $f_{m}^{\textrm{SR}1\star}(q_{m})$ is a concave function,
$\mathcal{OP}_{3}^{\textrm{EE}1}$ is nonconvex due to the fraction
form. To solve it, we introduce the following objective function:
\begin{align*}
H(\boldsymbol{q},\alpha)\triangleq & \sum_{m=1}^{M}f_{m}^{\textrm{SR}1\star}(q_{m})-\alpha\left(P_{T}+\sum_{m=1}^{M}q_{m}\right)\\
= & \sum_{m=1}^{M}\left(\tilde{R}_{1,m}+W_{2,m}B_{c}\log\left(\frac{q_{m}\Gamma_{2,m}+1}{\Omega_{m}\Gamma_{2,m}+1}\right)\right)\\
 & -\alpha\left(P_{T}+\sum_{m=1}^{M}q_{m}\right)
\end{align*}
where $\tilde{R}_{1,m}\triangleq W_{1,m}B_{c}\log\left(1+\Omega_{m}\Gamma_{1,m}\right)$
and $\alpha$ is a positive parameter. Then, we consider the following
convex problem with given $\alpha$:
\[
\mathcal{OP}_{4}^{\textrm{EE}1}:\begin{array}{cl}
\underset{\boldsymbol{q}}{\max} & H\left(\boldsymbol{q},\alpha\right)\\
\textrm{s.t.} & \sum_{m=1}^{M}q_{m}\leq P,\ q_{m}\geq\varTheta_{m},\;\forall m.
\end{array}
\]
The relation between $\mathcal{O}\mathcal{P}_{3}^{\textrm{EE}1}$
and $\mathcal{O}\mathcal{P}_{4}^{\textrm{EE}1}$ is given by the following
lemma.
\begin{lem}
\label{EE1 Lemma1}(\cite[pp. 493-494]{dinkelbach1967nonlinear})
Let $H^{\star}\left(\alpha\right)$ be the optimal objective value
of $\mathcal{O}\mathcal{P}_{4}^{\textrm{EE}1}$ and $\boldsymbol{q}^{\star}(\alpha)$
be the optimal solution of $\mathcal{O}\mathcal{P}_{4}^{\textrm{EE}1}$.
Then, $\boldsymbol{q}^{\star}(\alpha)$ is the optimal solution to
$\mathcal{OP}_{3}^{\textrm{EE}1}$ if and only if $H^{\star}\left(\alpha\right)=0$.
\end{lem}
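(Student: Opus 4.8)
The plan is to treat $\mathcal{OP}_{3}^{\textrm{EE}1}$ as a concave--fractional program and prove the equivalence by the standard parametric (Dinkelbach) argument. For brevity write $N(\boldsymbol{q})\triangleq\sum_{m=1}^{M}f_{m}^{\textrm{SR}1\star}(q_{m})$ and $D(\boldsymbol{q})\triangleq P_{T}+\sum_{m=1}^{M}q_{m}$, so that $\eta(\boldsymbol{q})=N(\boldsymbol{q})/D(\boldsymbol{q})$ and $H(\boldsymbol{q},\alpha)=N(\boldsymbol{q})-\alpha D(\boldsymbol{q})$, and let $\mathcal{Q}=\{\boldsymbol{q}:\sum_{m}q_{m}\leq P,\ q_{m}\geq\varTheta_{m}\}$ be the common feasible set. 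First I would record the two structural facts that make everything well posed: on $\mathcal{Q}$ the denominator is uniformly bounded below, $D(\boldsymbol{q})\geq P_{T}+\sum_{m}\varTheta_{m}>0$, and $\mathcal{Q}$ is compact while $N$ is continuous and concave (each $f_{m}^{\textrm{SR}1\star}$ is concave, as noted for $\mathcal{OP}_{3}^{\textrm{SR}1}$). Hence the maxima defining both $\mathcal{OP}_{3}^{\textrm{EE}1}$ and $\mathcal{OP}_{4}^{\textrm{EE}1}$ are attained and $H^{\star}(\alpha)$ is well defined for every $\alpha$.

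For the sufficiency direction, suppose $H^{\star}(\alpha)=0$, so $\boldsymbol{q}^{\star}(\alpha)$ attains $\max_{\boldsymbol{q}\in\mathcal{Q}}H(\boldsymbol{q},\alpha)=0$. This yields both $N(\boldsymbol{q})-\alpha D(\boldsymbol{q})\leq 0$ for every feasible $\boldsymbol{q}$ and $N(\boldsymbol{q}^{\star}(\alpha))-\alpha D(\boldsymbol{q}^{\star}(\alpha))=0$. Dividing each relation by the strictly positive $D$ gives $\eta(\boldsymbol{q})\leq\alpha$ for all feasible $\boldsymbol{q}$ and $\eta(\boldsymbol{q}^{\star}(\alpha))=\alpha$; therefore $\boldsymbol{q}^{\star}(\alpha)$ maximizes $\eta$, i.e.\ it solves $\mathcal{OP}_{3}^{\textrm{EE}1}$. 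This direction is short, and it is exactly where the positivity of $D$ is used.

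For the necessity direction I would introduce the value function $G(\alpha)\triangleq H^{\star}(\alpha)=\max_{\boldsymbol{q}\in\mathcal{Q}}(N(\boldsymbol{q})-\alpha D(\boldsymbol{q}))$ and establish that it is continuous, convex, and \emph{strictly} decreasing: it is a pointwise maximum of the affine maps $\alpha\mapsto N(\boldsymbol{q})-\alpha D(\boldsymbol{q})$, each of which decreases with slope $-D(\boldsymbol{q})\leq -(P_{T}+\sum_{m}\varTheta_{m})<0$, so $G$ inherits strict monotonicity and continuity and thus has a unique root. Repeating the sufficiency computation at $\alpha=\eta^{\star}$, where $\eta^{\star}$ denotes the optimal value of $\mathcal{OP}_{3}^{\textrm{EE}1}$, shows $N(\boldsymbol{q})-\eta^{\star}D(\boldsymbol{q})\leq 0$ for all feasible $\boldsymbol{q}$ with equality at any EE-optimal point, so $G(\eta^{\star})=0$; by uniqueness $\eta^{\star}$ is the \emph{only} zero of $G$. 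Now assume $\boldsymbol{q}^{\star}(\alpha)$ solves $\mathcal{OP}_{3}^{\textrm{EE}1}$, so $\eta(\boldsymbol{q}^{\star}(\alpha))=\eta^{\star}$; since $\boldsymbol{q}^{\star}(\alpha)$ simultaneously maximizes $H(\cdot,\alpha)$, I have $G(\alpha)=N(\boldsymbol{q}^{\star}(\alpha))-\alpha D(\boldsymbol{q}^{\star}(\alpha))=(\eta^{\star}-\alpha)D(\boldsymbol{q}^{\star}(\alpha))$, and the strict monotonicity of $G$ together with $G(\eta^{\star})=0$ pins the running parameter to $\alpha=\eta^{\star}$, whence $H^{\star}(\alpha)=0$.

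I expect the genuine work to sit entirely in the necessity direction, specifically in proving that $G$ is strictly decreasing so that its unique zero coincides with the EE-optimal value $\eta^{\star}$; this is what couples the parameter $\alpha$ to EE-optimality, and the uniform bound $D(\boldsymbol{q})\geq P_{T}+\sum_{m}\varTheta_{m}>0$ is precisely what keeps the slopes bounded away from zero and prevents $G$ from flattening. The compactness/concavity bookkeeping and the sufficiency argument are routine; the conceptual content is the monotone value-function characterization, which is the substance of the cited result in \cite{dinkelbach1967nonlinear}.
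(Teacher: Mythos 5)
The paper itself gives no proof of this lemma---it is stated with a bare citation to Dinkelbach---so your attempt is effectively being measured against the classical parametric-programming result rather than an in-paper argument. Your sufficiency direction is correct and standard: from $H^{\star}(\alpha)=0$ you get $N(\boldsymbol{q})-\alpha D(\boldsymbol{q})\leq0$ for every feasible $\boldsymbol{q}$ with equality at $\boldsymbol{q}^{\star}(\alpha)$, and dividing by the uniformly positive denominator gives EE-optimality. Your analysis of the value function $G(\alpha)=H^{\star}(\alpha)$---convex, continuous, strictly decreasing with slopes bounded by $-(P_{T}+\sum_{m}\varTheta_{m})<0$, hence a unique root, and that root is $\eta^{\star}$---is also correct and is the genuine content of Dinkelbach's theorem.

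The gap is the final step of your necessity direction. From $G(\alpha)=(\eta^{\star}-\alpha)D(\boldsymbol{q}^{\star}(\alpha))$, strict monotonicity of $G$, and $G(\eta^{\star})=0$, it does \emph{not} follow that $\alpha=\eta^{\star}$: if $\alpha<\eta^{\star}$ both facts merely assert $G(\alpha)>0$, and if $\alpha>\eta^{\star}$ both assert $G(\alpha)<0$, so there is no contradiction to extract. Indeed the literal ``only if'' direction is false in general. Take $M=1$ and $\alpha$ small enough that the maximizer of $H(\cdot,\alpha)$ sits at the total-power boundary $q=P$; if the EE optimum also uses full power (which occurs whenever $P$ is below the unconstrained EE-optimal budget), then $\boldsymbol{q}^{\star}(\alpha)$ coincides with the EE optimum for a whole interval of $\alpha$ on which $G(\alpha)=(\eta^{\star}-\alpha)D(P)\neq0$. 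What you have actually proved, correctly, is the precise form of the theorem: $H^{\star}(\alpha)=0$ if and only if $\alpha=\eta^{\star}$, and at that unique root the maximizers of $\mathcal{OP}_{4}^{\textrm{EE}1}$ are exactly the maximizers of $\mathcal{OP}_{3}^{\textrm{EE}1}$. That version is what justifies Algorithm \ref{EE Algorithm}; state the necessity direction in that form rather than forcing it into the lemma's looser phrasing, which equates EE-optimality of the point $\boldsymbol{q}^{\star}(\alpha)$ with a condition that really characterizes the parameter $\alpha$.
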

Lemma \ref{EE1 Lemma1} indicates the optimal solution to $\mathcal{O}\mathcal{P}_{3}^{\textrm{EE}1}$
can be found by solving $\mathcal{O}\mathcal{P}_{4}^{\textrm{EE}1}$
parameterized by $\alpha$ and then updating $\alpha$ until $H^{\star}\left(\alpha\right)=0$.
For this purpose, we first solve $\mathcal{O}\mathcal{P}_{4}^{\textrm{EE}1}$
with given $\alpha$, whose solution is provided in the following
result.
\begin{thm}
\label{T1-GEE1}The optimal solution to $\mathcal{OP}_{4}^{\textrm{EE}1}$
is
\begin{equation}
q_{m}^{\star}=\left[\frac{W_{2,m}B_{c}}{\alpha+\lambda}-\frac{1}{\Gamma_{2,m}}\right]_{\varTheta_{m}}^{\infty}\label{optimal q_m GEE1}
\end{equation}
where $\lambda$ is chosen such that $\sum_{m=1}^{M}q_{m}^{\star}=P$
.
\end{thm}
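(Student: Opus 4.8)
The plan is to solve $\mathcal{OP}_{4}^{\textrm{EE}1}$ purely through its Karush--Kuhn--Tucker (KKT) conditions. As already observed in the text, $\mathcal{OP}_{4}^{\textrm{EE}1}$ is convex: the objective $H(\boldsymbol{q},\alpha)$ is a sum of the concave functions $f_{m}^{\textrm{SR}1\star}(q_{m})$ (concavity of which is used in Section \ref{subsec:WSR1}) minus the affine term $\alpha(P_{T}+\sum_{m}q_{m})$, while the feasible set $\{\,\sum_{m}q_{m}\leq P,\ q_{m}\geq\varTheta_{m}\,\}$ is a polyhedron. Hence the KKT conditions are both necessary and sufficient for global optimality, and I would characterize $\boldsymbol{q}^{\star}$ entirely through them, exactly paralleling the argument behind Theorem \ref{T1 WSR1}.

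First I would form the Lagrangian, attaching a multiplier $\lambda\geq0$ to the budget constraint $\sum_{m}q_{m}\leq P$ and multipliers $\nu_{m}\geq0$ to the lower bounds $q_{m}\geq\varTheta_{m}$. The key simplification is that in $f_{m}^{\textrm{SR}1\star}(q_{m})$ only the term $W_{2,m}B_{c}\log(q_{m}\Gamma_{2,m}+1)$ depends on $q_{m}$, since $\Omega_{m}$ and $\tilde{R}_{1,m}$ are constants with respect to $q_{m}$; therefore $\partial f_{m}^{\textrm{SR}1\star}/\partial q_{m}=W_{2,m}B_{c}/(q_{m}+1/\Gamma_{2,m})$. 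The stationarity condition then reads $W_{2,m}B_{c}/(q_{m}+1/\Gamma_{2,m})=\alpha+\lambda-\nu_{m}$.

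Next I would split into the two complementary cases. When the lower bound is inactive ($\nu_{m}=0$), solving for $q_{m}$ gives $q_{m}=W_{2,m}B_{c}/(\alpha+\lambda)-1/\Gamma_{2,m}$, which is valid precisely when this quantity exceeds $\varTheta_{m}$; when it would fall below $\varTheta_{m}$, complementary slackness forces $q_{m}^{\star}=\varTheta_{m}$ with $\nu_{m}\geq0$ absorbing the residual. Merging the two cases yields the clipped waterfilling expression $q_{m}^{\star}=[\,W_{2,m}B_{c}/(\alpha+\lambda)-1/\Gamma_{2,m}\,]_{\varTheta_{m}}^{\infty}$ asserted in the theorem.

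Finally I would pin down $\lambda$. Since each $q_{m}^{\star}$ is nonincreasing in $\lambda$, the aggregate $\sum_{m}q_{m}^{\star}$ is monotone in $\lambda$, so the budget constraint together with complementary slackness selects $\lambda$ uniquely: either the level already satisfies $\sum_{m}q_{m}^{\star}\leq P$ (then $\lambda=0$), or $\lambda>0$ is taken so that $\sum_{m}q_{m}^{\star}=P$, found by bisection as in the remark after Theorem \ref{T1MMF}. I expect no serious obstacle; the only points requiring care are the clipping induced by the constraints $q_{m}\geq\varTheta_{m}$ (which produces the projection onto $[\varTheta_{m},\infty)$) and the monotonicity argument guaranteeing existence and uniqueness of the feasible $\lambda$.
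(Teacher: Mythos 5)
Your proposal is correct and takes essentially the same approach as the paper, whose proof of Theorem \ref{T1-GEE1} consists of the single sentence that the solution follows from the KKT conditions of $\mathcal{OP}_{4}^{\textrm{EE}1}$; you have simply filled in the stationarity, complementary-slackness, and clipping details that the paper leaves implicit (paralleling Theorem \ref{T1 WSR1}).
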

\begin{proof} The solution is obtained by exploiting the KKT conditions
of $\mathcal{O}\mathcal{P}_{4}^{\textrm{EE}1}$. \end{proof} After
the optimal solution to $\mathcal{OP}_{4}^{\textrm{EE}1}$ is obtained,
we shall find an $\alpha$ such that $H^{\star}\left(\alpha\right)=0$.
This can be achieved by Algorithm \ref{EE Algorithm}, which is guaranteed
to converge to the desirable $\alpha$ \cite{dinkelbach1967nonlinear}.
Thereby, the optimal power allocation for the EE maximization with
weights in NOMA systems is provided by Algorithm \ref{EE Algorithm},
Proposition \ref{P1-EE1} and Theorem \ref{T1-GEE1}.

\begin{algorithm}
\protect\protect\caption{\label{EE Algorithm}Channel Power Budget Optimization for EE}

1:\textbf{ Initialization}: set $\alpha_{ini}=0$, $H_{ini}^{\star}=\infty$
and precision $\delta>0$.

2:\textbf{ While} $\left|H^{\star}\left(\alpha\right)\right|>\delta$
\textbf{do}

3:~~~~~~Find the optimal $\boldsymbol{q}^{\star}$ according
to Theorem \ref{T1-GEE1};

4:~~~~~~Calculate $H^{\star}\left(\alpha\right)$ ;

5:~~~~~~Update $\alpha=\eta\left(\boldsymbol{q}^{\star}\right)$;

6:.\textbf{Return} $\alpha$ and $\boldsymbol{q}^{\star}$.
\end{algorithm}

\subsection{EE Maximization with QoS (EE2)}

In this subsection, we consider maximizing the EE with QoS constraints.
According to \eqref{max QEE}, in this case the power allocation problem
is given by
\[
\mathcal{OP}_{1}^{\textrm{EE}2}:\begin{array}{rl}
\underset{\boldsymbol{p}_{1},\boldsymbol{p}_{2}}{\max} & \frac{\sum_{m=1}^{M}\left(R_{1,m}\left(p_{1,m},p_{2,m}\right)+R_{2,m}\left(p_{1,m},p_{2,m}\right)\right)}{P_{T}+\sum_{m=1}^{M}\left(p_{1,m}+p_{2,m}\right)}\\
\textrm{s.t. } & \boldsymbol{0}\leq\boldsymbol{p}_{1}\leq\boldsymbol{p}_{2},\ \sum_{m=1}^{M}(p_{1,m}+p_{2,m})\leq P,\\
 & R_{n,m}\geq R_{n,m}^{\min},\ n=1,2,\ m=1,\ldots,M.
\end{array}
\]
Similarly, $\mathcal{OP}_{1}^{\textrm{EE}2}$ is a nonconvex fractional
optimization problem. In the literature, the EE maximization with
QoS constraints has only been studied in \cite{zhang2016energy},
but the optimal solution was only found for one channel, i.e., $M=1$.
Therefore, $\mathcal{OP}_{1}^{\textrm{EE}2}$ is an open problem and
its solution in the general case is still unknown.

To solve $\mathcal{OP}_{1}^{\textrm{EE}2}$, we also adopt $\left\{ q_{m}\right\} _{m=1}^{M}$
with $p_{1,m}+p_{2,m}=q_{m}$ and decompose $\mathcal{OP}_{1}^{\textrm{EE}2}$
into a group of subproblems for each channel $m$:
\begin{align*}
\underset{p_{1,m},p_{2,m}}{\max} & \frac{R_{1,m}\left(p_{1,m},p_{2,m}\right)+R_{2,m}\left(p_{1,m},p_{2,m}\right)}{P_{T}+\sum_{k=1}^{M}q_{k}}\\
\mathcal{OP}_{2,m}^{\textrm{EE2}}:\qquad\textrm{s.t.\ } & \ 0\leq p_{1,m}\leq p_{2,m},\ p_{1,m}+p_{2,m}=q_{m},\\
 & \ R_{1,m}\geq R_{1,m}^{\min},\ R_{2,m}\geq R_{2,m}^{\min}
\end{align*}
whose solution is given in the following closed form.
\begin{prop}
\label{P1-EE2}Suppose that $\Gamma_{1,m}\geq\Gamma_{2,m}$, $A_{2,m}\geq2$,
and $q_{m}\geq\varUpsilon_{m}$, with
\[
\begin{array}{l}
A_{l,m}=2^{\frac{R_{l,m}^{\min}}{B_{c}}},\quad\varUpsilon_{m}\triangleq\frac{A_{2,m}(A_{1,m}-1)}{\Gamma_{1,m}}+\frac{A_{2,m}-1}{\Gamma_{2,m}},\\
\Xi_{m}\triangleq\frac{\Gamma_{2,m}q_{m}-A_{2,m}+1}{A_{2,m}\Gamma_{2,m}}.
\end{array}
\]
Then, the optimal solution to $\mathcal{OP}_{2,m}^{\textrm{EE}2}$
is $p_{1,m}^{\star}=\Xi_{m}$ and $p_{2,m}^{\star}=q_{m}-p_{1,m}^{\star}$.
\end{prop}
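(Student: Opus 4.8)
The plan is to exploit the fact that the channel power budgets $\{q_k\}_{k=1}^{M}$ are treated as fixed constants inside the subproblem $\mathcal{OP}_{2,m}^{\textrm{EE2}}$. Consequently, the denominator $P_{T}+\sum_{k=1}^{M}q_{k}$ is a strictly positive constant that does not depend on the optimization variables $p_{1,m},p_{2,m}$. Maximizing the fractional objective is therefore equivalent to maximizing its numerator $R_{1,m}(p_{1,m},p_{2,m})+R_{2,m}(p_{1,m},p_{2,m})$ over the same feasible set. But this is precisely the subproblem $\mathcal{OP}_{2,m}^{\textrm{SR2}}$, whose feasible region and objective coincide exactly with those appearing here. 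Hence I would simply invoke Proposition \ref{P1 SR} to conclude that, under the stated hypotheses $\Gamma_{1,m}\geq\Gamma_{2,m}$, $A_{2,m}\geq2$, and $q_{m}\geq\varUpsilon_{m}$, the optimal solution is $p_{1,m}^{\star}=\Xi_{m}$ and $p_{2,m}^{\star}=q_{m}-p_{1,m}^{\star}$.

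For a self-contained derivation that does not lean on the earlier result, I would replicate its argument. First substitute $p_{2,m}=q_{m}-p_{1,m}$, which turns the power-order constraint $p_{1,m}\leq p_{2,m}$ into $p_{1,m}\leq q_{m}/2$ and reduces the numerator to a single-variable function $T(p_{1,m})$. Differentiating and using $\Gamma_{1,m}\geq\Gamma_{2,m}$ shows $dT/dp_{1,m}\geq0$, so $T$ is monotonically nondecreasing and its maximizer is the largest feasible $p_{1,m}$. The two QoS constraints $R_{1,m}\geq R_{1,m}^{\min}$ and $R_{2,m}\geq R_{2,m}^{\min}$ translate into the interval $(A_{1,m}-1)/\Gamma_{1,m}\leq p_{1,m}\leq\Xi_{m}$, so the optimum sits at the upper endpoint $p_{1,m}^{\star}=\Xi_{m}$.

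Finally I would verify that the two hypotheses guarantee feasibility together with the correct power ordering. The lower bound does not exceed the upper bound, i.e. the interval is nonempty, exactly when $(A_{1,m}-1)/\Gamma_{1,m}\leq\Xi_{m}$, which rearranges into $q_{m}\geq\varUpsilon_{m}$; and $A_{2,m}\geq2$ is what forces $\Xi_{m}<q_{m}/2$, so that $p_{1,m}^{\star}=\Xi_{m}$ respects $p_{1,m}\leq p_{2,m}$. There is no genuine obstacle here: the entire content of the proof is the observation that fixing $\{q_{k}\}$ renders the denominator constant, collapsing $\mathcal{OP}_{2,m}^{\textrm{EE2}}$ onto the already-solved $\mathcal{OP}_{2,m}^{\textrm{SR2}}$; everything else is the routine monotonicity-plus-endpoint analysis already carried out in Proposition \ref{P1 SR}.
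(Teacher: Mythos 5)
Your proposal is correct and matches the paper exactly: the paper gives no separate proof for this proposition, instead noting in the following remark that $\mathcal{OP}_{2,m}^{\textrm{EE2}}$ is equivalent to $\mathcal{OP}_{2,m}^{\textrm{SR2}}$ once the $\{q_{k}\}$ are fixed (constant denominator), so the result follows from Proposition \ref{P1 SR}. Your self-contained fallback also reproduces the paper's proof of that earlier proposition step for step, so there is nothing to add.
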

\begin{rem}
It is not surprising that the solutions of $\mathcal{OP}_{2,m}^{\textrm{EE2}}$
and $\mathcal{OP}_{2,m}^{\textrm{SR2}}$ coincide, since $\mathcal{OP}_{2,m}^{\textrm{EE2}}$
is equivalent to $\mathcal{OP}_{2,m}^{\textrm{SR2}}$ with given $\left\{ q_{m}\right\} _{m=1}^{M}$.
Therefore, the same SIC-stability conditions hold for $\mathcal{OP}_{2,m}^{\textrm{EE}2}$
and $\mathcal{OP}_{1}^{\textrm{EE}2}$: the NOMA system is SIC-stable
on channel $m$ if and only if $q_{m}\geq\varUpsilon_{m}$ and $A_{2,m}\geq2$,
and is SIC-stable on all channels only if $P\geq\sum_{m=1}^{M}\varUpsilon_{m}$
and $A_{2,m}\geq2$ for $m=1,\ldots,M$.
\end{rem}
Next, we optimize the channel power budget $q_{m}$ for each channel.
First, we assume that $q_{m}\geq\varUpsilon_{m}$ and $P\geq\sum_{m=1}^{M}\varUpsilon_{m}$
to guarantee the SIC-stability. Then, according to Proposition \ref{P1-EE2},
$\mathcal{OP}_{1}^{\textrm{EE}2}$, and $\mathcal{OP}_{2,m}^{\textrm{EE}2}$,
the power budget optimization problem is given by
\[
\mathcal{OP}_{3}^{\textrm{EE}2}:\begin{array}{cl}
\underset{\boldsymbol{q}}{\max} & \eta(\boldsymbol{q})\triangleq\frac{\sum_{m=1}^{M}f_{m}^{\textrm{SR}2\star}(q_{m})}{P_{T}+\sum_{m=1}^{M}q_{m}}\\
\textrm{s.t. } & \sum_{m=1}^{M}q_{m}\leq P,\ q_{m}\geq\Upsilon_{m},\;\forall m
\end{array}
\]
where $f_{m}^{\textrm{SR}2\star}(q_{m})$ is the optimal value of
$\mathcal{OP}_{2,m}^{\textrm{SR}2}$ and given in (\ref{f*SR2}).
To solve $\mathcal{OP}_{3}^{\textrm{EE}2}$, we also introduce a parameterized
objective function
\begin{align*}
Q(\boldsymbol{q},\alpha) & \triangleq\sum_{m=1}^{M}f_{m}^{\textrm{SR}2\star}(q_{m})-\alpha\left(P_{T}+\sum_{m=1}^{M}q_{m}\right)\\
 & =\sum_{m=1}^{M}\left(w(q_{m})+R_{2,m}^{\min}\right)-\alpha\left(P_{T}+\sum_{m=1}^{M}q_{m}\right)
\end{align*}
where $w(q_{m})=B_{c}\log\frac{\left(A_{2,m}\Gamma_{2,m}-A_{2,m}\Gamma_{1,m}+\Gamma_{1,m}\Gamma_{2,m}q_{m}+\Gamma_{1,m}\right)}{A_{2,m}\Gamma_{2,m}}$,
$\alpha$ is a positive parameter, and formulate the following problem
with given $\alpha$:
\[
\mathcal{OP}_{4}^{\textrm{EE}2}:\begin{array}{cl}
\underset{\boldsymbol{q}}{\max} & Q\left(\boldsymbol{q},\alpha\right)\\
\textrm{s.t.} & \sum_{m=1}^{M}q_{m}\leq P,\ q_{m}\geq\Upsilon_{m},\;\forall m.
\end{array}
\]
Then, according to Lemma \ref{EE1 Lemma1}, the optimal solution to
$\mathcal{O}\mathcal{P}_{3}^{\textrm{EE}2}$ can be found by solving
$\mathcal{O}\mathcal{P}_{4}^{\textrm{EE}2}$ for a given $\alpha$
and then updating $\alpha$ until the optimal objective value of $\mathcal{O}\mathcal{P}_{4}^{\textrm{EE}2}$,
denoted by $Q^{\star}(\alpha)$, satisfies $Q^{\star}(\alpha)=0$.
Therefore, we first solve $\mathcal{O}\mathcal{P}_{4}^{\textrm{EE}2}$,
which is a convex problem since $Q(\boldsymbol{q},\alpha)$ is concave
in $\boldsymbol{q}$. In particular, the optimal solution to $\mathcal{O}\mathcal{P}_{4}^{\textrm{EE}2}$
is provided as follows.
\begin{thm}
\label{T1-GEE2} The optimal solution to $\mathcal{OP}_{4}^{\textrm{EE}2}$
is
\[
q_{m}^{\star}=\left[\frac{W_{1,m}B_{c}}{\lambda+\alpha}-\frac{A_{2,m}}{\Gamma_{1,m}}+\frac{A_{2,m}}{\Gamma_{2,m}}-\frac{1}{\Gamma_{2,m}}\right]_{\Upsilon_{m}}^{\infty}
\]
where $\lambda$ is chosen such that $\sum_{m=1}^{M}q_{m}^{\star}=P$
.
\end{thm}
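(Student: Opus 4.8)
The plan is to solve $\mathcal{OP}_{4}^{\textrm{EE}2}$ directly through its Karush--Kuhn--Tucker (KKT) conditions. As already remarked before the statement, $Q(\boldsymbol{q},\alpha)$ is concave in $\boldsymbol{q}$ (each $w(q_m)$ is a concave logarithm of an affine function and $-\alpha\sum_m q_m$ is linear), and both the budget constraint $\sum_m q_m\le P$ and the QoS floors $q_m\ge\varUpsilon_m$ are affine; hence $\mathcal{OP}_{4}^{\textrm{EE}2}$ is a convex program for which the KKT conditions are necessary and sufficient. The guiding observation is that $\mathcal{OP}_{4}^{\textrm{EE}2}$ is identical to $\mathcal{OP}_{3}^{\textrm{SR2}}$ of Theorem~\ref{T1 WSR2} except for the extra linear penalty $-\alpha\sum_m q_m$, so I expect the entire derivation to coincide with that of Theorem~\ref{T1 WSR2} after shifting the dual variable.

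First I would attach a multiplier $\lambda\ge 0$ to $\sum_{m=1}^{M}q_m-P\le 0$ and multipliers $\nu_m\ge 0$ to $\varUpsilon_m-q_m\le 0$, giving
\[
L=Q(\boldsymbol{q},\alpha)-\lambda\Bigl(\textstyle\sum_{m=1}^{M}q_m-P\Bigr)+\sum_{m=1}^{M}\nu_m\bigl(q_m-\varUpsilon_m\bigr).
\]
Stationarity $\partial L/\partial q_m=0$ then reads $w'(q_m)-\alpha-\lambda+\nu_m=0$. The crucial computation is $w'(q_m)$: after factoring the argument of the logarithm as $\Gamma_{1,m}\Gamma_{2,m}\bigl(q_m+\tfrac{A_{2,m}}{\Gamma_{1,m}}-\tfrac{A_{2,m}}{\Gamma_{2,m}}+\tfrac{1}{\Gamma_{2,m}}\bigr)$, the derivative collapses to $w'(q_m)=B_c\big/\bigl(q_m+\tfrac{A_{2,m}}{\Gamma_{1,m}}-\tfrac{A_{2,m}}{\Gamma_{2,m}}+\tfrac{1}{\Gamma_{2,m}}\bigr)$, which is exactly the source of the offset terms appearing in the claimed solution.

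Next I would carry out the standard case split driven by complementary slackness $\nu_m(q_m-\varUpsilon_m)=0$. When the floor is inactive ($\nu_m=0$), stationarity forces $w'(q_m)=\lambda+\alpha$, and inverting this relation gives the interior value $q_m=B_c/(\lambda+\alpha)-\tfrac{A_{2,m}}{\Gamma_{1,m}}+\tfrac{A_{2,m}}{\Gamma_{2,m}}-\tfrac{1}{\Gamma_{2,m}}$; when the floor is active one has $q_m^{\star}=\varUpsilon_m$. Merging the two cases yields precisely the clipped water-filling form $q_m^{\star}=[\,\cdot\,]_{\varUpsilon_m}^{\infty}$ of the theorem, with the Dinkelbach parameter $\alpha$ entering only through the combination $\lambda+\alpha$. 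Finally, complementary slackness on $\lambda$ determines the water level, and because each $q_m^{\star}$ is monotonically decreasing in $\lambda+\alpha$, the value of $\lambda$ making the budget bind, $\sum_{m=1}^{M}q_m^{\star}=P$, is unique and obtainable by bisection.

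I expect the only nonroutine part to be the algebraic factorization of the argument of $w$ and the verification that the clipping threshold produced by the KKT analysis is exactly $\varUpsilon_m$, which must agree with the feasibility condition $q_m\ge\varUpsilon_m$ of Proposition~\ref{P1-EE2}; the rest mirrors the proof of Theorem~\ref{T1 WSR2} under the substitution $\lambda\mapsto\lambda+\alpha$ forced by the penalty term used in Lemma~\ref{EE1 Lemma1}. One caveat worth double-checking during the write-up is the numerator of the interior water level: since the objective $Q$ carries no user weights, the computation above produces the coefficient $B_c$ rather than the $W_{1,m}B_c$ written in the statement.
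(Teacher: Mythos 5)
Your proof is correct and follows exactly the route the paper intends---its own proof is the one-line remark that the solution is obtained from the KKT conditions of $\mathcal{OP}_{4}^{\textrm{EE}2}$, and your Lagrangian, the factorization of the argument of $w(q_m)$, and the complementary-slackness case split are precisely that derivation. Your closing caveat is also well taken: since $Q(\boldsymbol{q},\alpha)$ carries no user weights, the interior water level should read $B_{c}/(\lambda+\alpha)$, so the $W_{1,m}$ in the stated formula is a typo (compare the unweighted counterpart in Theorem~\ref{T1 WSR2}).
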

\begin{proof} The solution is obtained by exploiting the KKT conditions
of $\mathcal{O}\mathcal{P}_{4}^{\textrm{EE}2}$. \end{proof} After
obtaining the optimal solution to $\mathcal{OP}_{4}^{\textrm{GEE}2}$,
we shall find an $\alpha$ such that $Q^{\star}(\alpha)=0$. This
can also be achieved by Algorithm \ref{EE Algorithm}, where Theorem
\ref{T1-GEE1} and $H^{\star}\left(\alpha\right)$ are replaced by
Theorem \ref{T1-GEE2} and $Q^{\star}(\alpha)$, respectively. Consequently,
the optimal power allocation for the EE maximization with QoS constraints
in NOMA systems is obtained by using Theorem \ref{T1-GEE2}, Proposition
\ref{P1-EE2}, and Algorithm \ref{EE Algorithm}.

\section{\label{sec:MATCHING}Channel Assignment}

In the previous sections, we have found the optimal power allocation
with given channel assignment under various performance criteria for
NOMA systems. Specifically, we first achieve the optimal power allocation
of the users on each channel, which can be expressed as functions
of the power budget of each channel. Then, we further optimized the
power budgets of all channels and thus obtained the optimal multichannel
power allocation, which can be characterized in a closed or semi-closed
form. In this section, we consider the joint optimization of power
allocation and channel assignment in NOMA systems. Unfortunately,
such a joint optimization problem has been shown to be NP-hard \cite{lei2015joint}.
Hence, finding the jointly optimal solution generally requires exhaustive
search \cite{zhang2016radio}, which results in prohibitive computational
complexity. Therefore, in practice suboptimal but efficient joint
optimization methods are more preferred \cite{fang2016energy,parida2014power,zhang2016radio,hojeij2015resource}.

Enlightened by the optimal multichannel power allocation obtained
above, in this paper we propose a low-complexity method to jointly
optimize the power allocation and channel assignment in NOMA systems.
Specifically, we incorporate the dynamic matching algorithm\cite{2014dynamicmatching},
which is an efficient method to deal with assignment problems, with
our optimal power allocation, and iteratively exploit them to refine
the solution.

To describe the dynamic matching between the users and the channels,
we consider channel assignment as a two-sided matching problem between
the set of $N$ users and the set of $M$ channels, where $N=2M$
since each channel is shared by two users. Denote channel $m$ by
$C_{m}$. We say $\textrm{U}\textrm{E}_{n}$ and $C_{m}$ are matched
with each other if $\textrm{U}\textrm{E}_{n}$ is assigned on $C_{m}$.
Moreover, denote $PF\left(\textrm{U}\textrm{E}_{n}\right)$ for $n=1,2,\cdots N$
and $PF\left(C_{m}\right)$ for $m=1,2,\cdots M$ to be the preference
lists of the users and channels, respectively. We say $\textrm{U}\textrm{E}_{n}$
prefers $C_{i}$ to $C_{j}$ if $\textrm{U}\textrm{E}_{n}$ has a
higher channel gain on $C_{i}$ than on $C_{j}$, and it can be expressed
as
\[
C_{i}(n)\succ C_{j}(n).
\]
In addition, we say that $C_{m}$ prefers user set $\varsigma_{l}$
to user set $\varsigma_{k}$ (where $\varsigma_{l}$ and $\varsigma_{k}$
are the subsets of $\left\{ 1,2,\cdots N\right\} $) if the users
in set $\varsigma_{l}$ can provide better performance than the users
in set $\varsigma_{k}$ on $C_{m}$. This preference is expressed
as
\[
O_{m}(\varsigma_{l})>O_{m}(\varsigma_{k}),\ \varsigma_{l},\varsigma_{k}\subset\left\{ \textrm{U}\textrm{E}_{1},\textrm{U}\textrm{E}_{2}\cdots\textrm{U}\textrm{E}_{N_{m}}\right\}
\]
where $O_{m}(\varsigma_{l})$ denotes the performance measure of user
set $\varsigma_{l}$ on $C_{l}$, which could be the maximin fairness,
sum rate, or energy efficiency introduced in Section \ref{subsec:Problem-Formulation}.
Now, we are ready to formulate the channel assignment optimization
as a two-side matching problem according to matching theory \cite{2017two-sidematching}.
The following definition formally introduces a matching in the NOMA
system.
\begin{defn}
\label{def:matching}Consider the users and channels as two disjoint
sets. A two-to-one matching $\Phi$ is a mapping from all the subsets
of users $\boldsymbol{N}$ into the channels set $\boldsymbol{M}$,
which satisfies the following properties for $UE_{n}\in\boldsymbol{N}$
and $C_{m}\in\boldsymbol{M}$

(a) $\Phi\left(\textrm{U}\textrm{E}_{n}\right)\in\boldsymbol{M}$;

(b) $\Phi^{-1}\left(C_{m}\right)\subseteq\boldsymbol{N}$;

(c) $\left|\Phi\left(\textrm{U}\textrm{E}_{n}\right)\right|=1$, $\left|\Phi^{-1}\left(C_{m}\right)\right|=2$;

(d) $C_{m}\in\Phi\left(\textrm{U}\textrm{E}_{n}\right)\iff\textrm{U}\textrm{E}_{n}\in\Phi^{-1}\left(C_{m}\right)$.
\end{defn}
In Definition \ref{def:matching}, property (a) states that each user
matches with one channel, property (b) indicates that each channel
can be matched with a subset of users, property (c) states that each
channel can only be assigned to two users, and property (d) means
that $UE_{n}$ and $C_{m}$ are matched with each other. Consequently,
the channel assignment problem is to identify a matching between the
users and the channels. However, the globally optimal matching that
maximizes the aggregate performance of all users is hard to find and
usually requires exhaustive search. Instead, in practice, people are
more interested in seeking a so-called stable matching, which can
be efficiently found by the deferred acceptance (DA) procedure \cite{2016Scableand}.
\begin{defn}
Given a matching $\Phi$ such that $\textrm{U}\textrm{E}_{n}\notin\Phi^{-1}\left(C_{m}\right)$
and $C_{m}\notin\Phi\left(\textrm{U}\textrm{E}_{n}\right)$. If $O_{m}(S_{\textrm{new}})>O_{m}\left(\Phi^{-1}\left(C_{m}\right)\right)$
where $S_{\textrm{new}}\subseteq\left\{ \textrm{U}\textrm{E}_{n}\right\} \cup S$
and $S=\Phi^{-1}\left(C_{m}\right)$, then $S_{\textrm{new}}$ is
the preferred user set for $C_{m}$ and $\left(\textrm{U}\textrm{E}_{n},C_{m}\right)$
is a preferred pair.
\end{defn}
\begin{algorithm}
\protect\protect\caption{\label{alg:CMA-Matching}Channel Assignment via Matching }

1:\textbf{ Initialize}:

\hspace*{0.5cm}1) $S_{\textrm{Match}}(m)$ is the matched list to
record users matched

\hspace*{0.9cm}on $C_{m}$, $m=\left\{ 1,2\cdots M\right\} $.

\hspace*{0.5cm}2) $S_{\textrm{UnMatch}}$ is the set of unmatched
users.

\hspace*{0.5cm}3) Obtain preference lists $PF\left(\textrm{U}\textrm{E}_{n}\right),n=\left\{ 1,2,\cdots N\right\} $

\hspace*{0.9cm} and $PF\left(C_{m}\right),m=\left\{ 1,2\cdots M\right\} $
according to CNRs.

2:\textbf{ while} $\left\{ S_{\textrm{UnMatch}}\right\} $ is not
empty

3: \hspace*{0.3cm}\textbf{for} $n=1$ to $N$ \textbf{do }

\hspace*{0.8cm}Each user sends matching request to its most preferred

\hspace*{0.8cm}channel $m^{*}$ according to its preference list
$PF\left(\textrm{U}\textrm{E}_{n}\right)$.

4: \hspace*{0.6cm}\textbf{if} $\left|S_{\textrm{Match}}\left(m^{*}\right)\right|<2$
\textbf{then}

\hspace*{1.2cm}Channel $m^{*}$ adds $\textrm{U}\textrm{E}_{n}$
to $S_{\textrm{Match}}\left(m^{*}\right)$ and removes

\hspace*{1.2cm}$\textrm{U}\textrm{E}_{n}$ from $\left\{ S_{\textrm{UnMatch}}\right\} .$

\hspace*{1cm}\textbf{end if}

5: \hspace*{0.6cm} \textbf{if }$\left|S_{\textrm{Match}}\left(m^{*}\right)\right|=2$
\textbf{then}

\hspace*{1.2cm}1) Identify the power allocation for every two users

\hspace*{1.2cm}in $S_{\varsigma_{L}}$,$S_{\varsigma_{L}}\subset\left\{ S_{\textrm{Match}}(m^{*}),n\right\} $
according to the

\hspace*{1.2cm}corresponding proposition with $q_{m^{*}}$.

\hspace*{1.2cm}2) Channel $m^{*}$ selects a set of $2$ users $S_{\varsigma_{L}}$
satisfying

\hspace*{1.2cm}the objective functions $O_{m^{*}}\left(\varsigma_{l}\right)>O_{m^{*}}\left(\varsigma_{k}\right),$

\hspace*{1.2cm}$\varsigma_{l},\varsigma_{k}\subset\left\{ S_{\textrm{Match}}\left(m^{*}\right),n\right\} $.

\noindent \hspace*{1.2cm}3) Channel $m^{*}$ sets $S_{\textrm{Match}}\left(m^{*}\right)=\varsigma_{l}$,and
reject other

\noindent \hspace*{1.2cm} users. Remove the allocated users from
$\left\{ S_{\textrm{UnMatch}}\right\} $,

\noindent \hspace*{1.2cm}add the unallocated user to$\left\{ S_{\textrm{UnMatch}}\right\} $.

\noindent \hspace*{1.2cm}4) The rejected user remove channel $m^{*}$
from their

\noindent \hspace*{1.2cm}preference lists.

\hspace*{1cm} \textbf{end if}

\hspace*{0.7cm}\textbf{end for}

\hspace*{0.2cm}\textbf{end while}
\end{algorithm}

According to the DA procedure, each user sends a matching request
to its most preferred channel according to its preference list, while
this preferred channel has the right to accept or reject the user
according to the performance that all users can achieve on this channel.
Thus, the DA procedure is to find preferred pairs for each user and
each channel, which is formally described in Algorithm \ref{alg:CMA-Matching}.
In Algorithm \ref{alg:CMA-Matching}, the first step is to initialize
the preference lists of channels and users according to the CNRs.
Meantime, $S_{\textrm{Match}}(m)$ and $S_{\textrm{UnMatch}}(m)$
for $m=1,2,\cdots M$ are respectively initialized to record the allocated
users on $C_{m}$ and unallocated users. The next step is the matching
procedure, where at each round each user sends a matching request
according to its preferred list $PF\left(\textrm{U}\textrm{E}_{n}\right)$
for $n=1,2,\cdots N$ . Then, the channel accepts the user directly
if the number of the users on this channel is less than two, otherwise
only the user that can improve the performance will be accepted. This
matching process will terminate when there is no user left to be matched.

\begin{algorithm}
\caption{\label{Joint-Channel-Assignment}Joint Channel Assignment and Power
Allocation Optimization}

1: \textbf{Initialize}: $q_{m}=\frac{P}{M}$ for $m=1,\ldots,M$;

2: \textbf{Repeat}

3:\hspace*{1.2cm}Obtain channel assignment $\left\{ S_{\textrm{ Match}}\left(m\right)\right\} _{m=1}^{M}$
using

\hspace*{1.45cm}Algorithm \ref{alg:CMA-Matching};

4:\hspace*{1.2cm}Compute $\boldsymbol{p}_{1}^{\star}$, $\boldsymbol{p}_{2}^{\star}$,
$\boldsymbol{q}^{\star}$ according to the results in this

\hspace*{1.45cm}paper;

5: \textbf{Until }the prescribed iteration number is reached;

6: \textbf{Output}: $\left\{ S_{\textrm{ Match}}\left(m\right)\right\} _{m=1}^{M}$,
$\boldsymbol{p}_{1}^{\star}$, $\boldsymbol{p}_{2}^{\star}$.
\end{algorithm}

Now, we are able to jointly optimize channel assignment and power
allocation by using Algorithm \ref{alg:CMA-Matching} and the optimal
power allocation obtained in this paper, which is described in Algorithm
\ref{Joint-Channel-Assignment}. In the initialization, the BS allocates
equal power budgets to all channels. In the next step, we obtain the
channel assignment using Algorithm \ref{alg:CMA-Matching}, then update
the optimal power allocation for each user and power budget for each
channel, and so on.
\begin{rem}
\textcolor{black}{It is worth pointing out that the optimal power
allocation provided in this paper can be jointly used not only with
the DA matching algorithm (i.e., Algorithm \ref{alg:CMA-Matching})
but also with any other assignment algorithms. One just needs to replace
Algorithm \ref{alg:CMA-Matching} by the desirable assignment algorithm
in Algorithm \ref{Joint-Channel-Assignment}. Furthermore, our results
can also reduce the complexity of exhaustive search. Indeed, in \cite{sun2016optimal},
the exhaustive search was performed in the joint continuous-discrete
dimension of powers and channels. Now, given the optimal power allocation
of all users over multiple channels for fixed channel assignment,
one can focus on searching the optimal channel assignment by, e.g.,
checking all possible user-channel matchings, which is a pure combinatorial
problem but not a mixed one anymore. In fact, we will show in the
next section that the performance of the proposed low-complexity joint
resource optimization method is quite close to that of the globally
optimal solution found by exhaustive search.}
\end{rem}

\section{\label{sec:SIMULATION}Numerical Results}

\textcolor{black}{In this section, we evaluate the performance of
the optimal power allocation and the proposed joint resource optimization
method via numerical simulations. In simulations, the base station
is located in the cell center and the users are randomly distributed
in a circular range with a radius of $300$m. The minimum distance
between users is set to be $30$m, and the minimum distance between
users and BS is $40$m. Each channel coefficient follows an i.i.d.
Gaussian distribution as $g_{m}\sim\mathcal{CN}(0,1)$ for $m=1,\ldots,M$
and the path loss exponent is $\alpha=2$. The total power budget
of the BS is $P=41\textrm{dBm}$ and the circuit power consumption
is $P_{T}=30\textrm{dBm}$. The noise power is $\sigma_{m}^{2}=BN_{0}/M$,
where the bandwidth is $B=5\textrm{MHz}$ and the noise power spectral
density is $N_{0}=-174\textrm{dBm}$. We set the user weights to be
$W_{1,m}=0.9$ and $W_{2,m}=1.1$ for $\forall m$ and the QoS thresholds
to be $R_{l,m}^{\min}=2$ bps/Hz for $l=1,2$, $\forall m$. We compare
the proposed joint resource allocation (JRA) method that uses the
optimal power allocation and matching algorithm, with OFDMA where
each user occupies a bandwidth $B/N$ and the power is optimized in
a waterfilling manner, the DC method used in \cite{fang2016energy,parida2014power}
where the power allocation was optimized via DC programming, the conventional
user pairing (CUP) method used in \cite{7511620} where the same channel
is assigned to the users with a significant channel gain difference,
and the exhaustive search.}

\textcolor{black}{Fig. 1 depicts the minimum user rates of the NOMA
system using the proposed JRA method under the MMF criterion (NOMA
JRA), the CUP method with our optimal power allocation for MMF (NOMA
CUP), and the OFDMA system for different total power budgets and user
numbers. It is clearly seen that NOMA is better than OFDMA in terms
of user fairness and the performance gap between NOMA and OFDMA becomes
larger as the number of users increases. Meantime, the proposed JRA
method outperforms the CUP method.}

\begin{figure}[h]
\centering \includegraphics[scale=0.6]{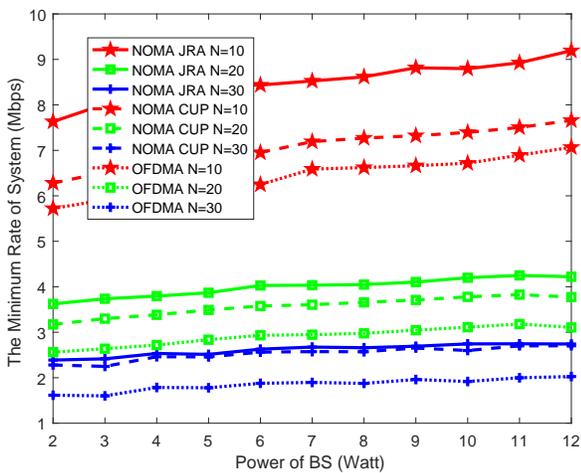}\protect\protect\caption{Minimum user rate for different number of users versus BS power }
\end{figure}

\textcolor{black}{In Fig. 2, we display the sum rates achieved by
various schemes. SR1 JRA and SR2 JRA denote the proposed JRA method
for maximizing the weighted SR and the SR with QoS constraints, while
SR1 DC represents the DC method for maximizing the weighted SR. Meanwhile,
SR1 CUP and SR2 CUP use the CUP method with our optimal power allocation
under the criteria of maximizing the weighted SR and the SR with QoS
constraints, respectively. The number of users is $10$ in this scenario.
As expected, all NOMA schemes (SR1 JRA, SR2 JRA , SR1 CUP, SR2 CUP
and SR1 DC) outperform OFDMA. Moreover, it is also observed that SR1
JRA outperforms SR1 DC. This is because the proposed resource allocation
uses the optimal power allocation while the DC method leads to a suboptimal
power allocation. In addition, both SR1 JRA and SR2 JRA achieve better
performance than SR1 CUP and SR2 CUP, which implies the proposed channel
assignment method is essential to the performance of the NOMA system.}

\begin{figure}[h]
\centering \includegraphics[scale=0.6]{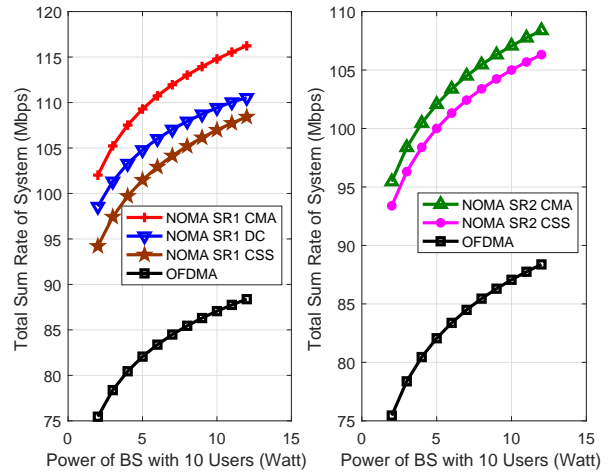}\protect\protect\caption{Sum rate versus BS power }
\end{figure}

\textcolor{black}{Fig. 3 shows the spectral efficiency versus the
number of users. One can observe the similar phenomenon as in Fig.
2. where the spectral efficiency of NOMA outperforms that of OFDMA
and the proposed JRA method leads to higher spectral efficiency than
the DC method and the CUP method.}

\begin{figure}[h]
\centering \includegraphics[scale=0.6]{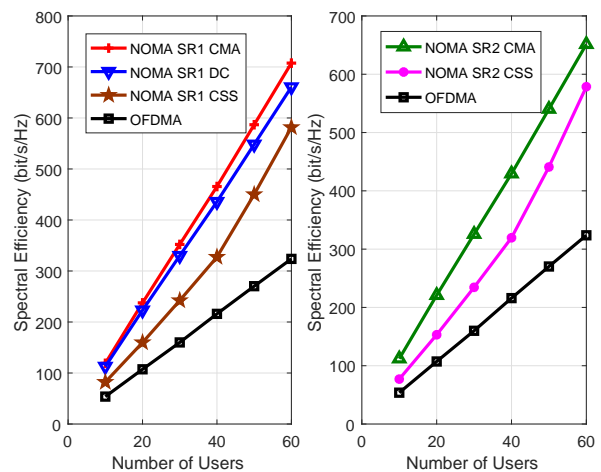}\protect\protect\caption{Spectral efficiency versus the number of users }
\end{figure}

\textcolor{black}{Fig. 4 and Fig. 5 display the EE versus the power
budget of BS and the number of users, respectively. The proposed methods
for maximizing EE with weights or QoS constraints are denoted by EE1
JRA and EE2 JRA, respectively, while EE1 DC denotes the EE maximization
using DC programming. EE1 CUP and EE2 CUP represent the method of
CUP with our optimal power solutions to maximize EE with weights or
QoS constraints, respectively. From Figs. 4 and 5, one can see that
the EE of NOMA is significantly higher than that of OFDMA. Meanwhile,
EE1 JRA achieves better performance than EE1 DC as a result of using
the optimal power allocation. EE1 JRA and EE2 JRA are respectively
better than EE1 CUP and EE2 CUP because the channel assignment is
optimized by the proposed joint optimization method.}

\begin{figure}[h]
\centering \includegraphics[scale=0.6]{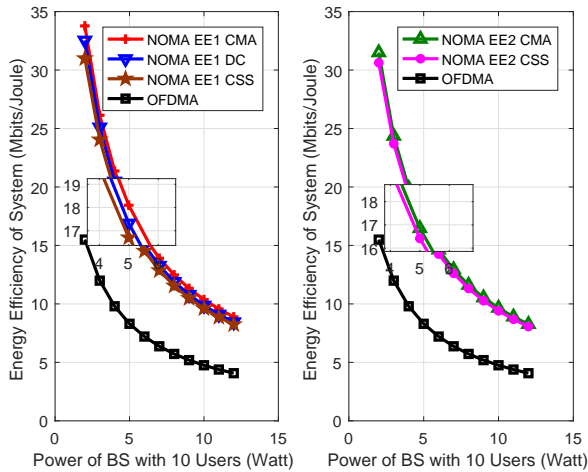}\protect\protect\caption{Energy efficiency versus BS power }
\end{figure}

\begin{figure}
\centering \includegraphics[scale=0.6]{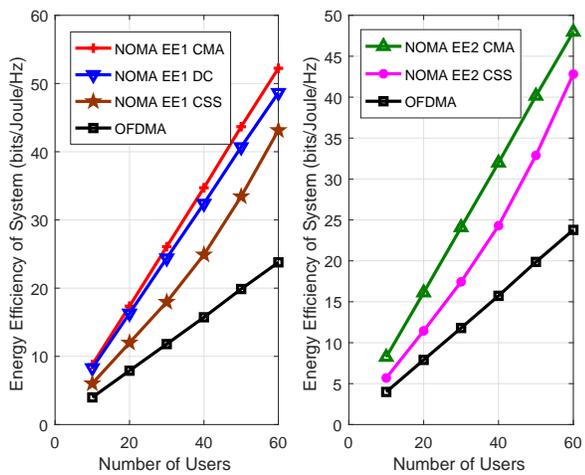}\protect\protect

\caption{Energy efficiency versus the number of users}
\end{figure}

Finally, in Fig. 6, all of our JRA methods are compared to the exhaustive
search (ES). Due to the high complexity of ES, we set the number of
users $N=6$ and the power budget of the BS ranges from $2\textrm{W}$
to $12\textrm{W}$. From Fig. 6, the performance achieved the proposed
methods is very close to the globally optimal value and the maximum
gap is less than $5\%$. Therefore, the proposed joint channel assignment
and power allocation method is able to achieve near-optimal performance
with low complexity.

\begin{figure}
\centering \includegraphics[scale=0.6]{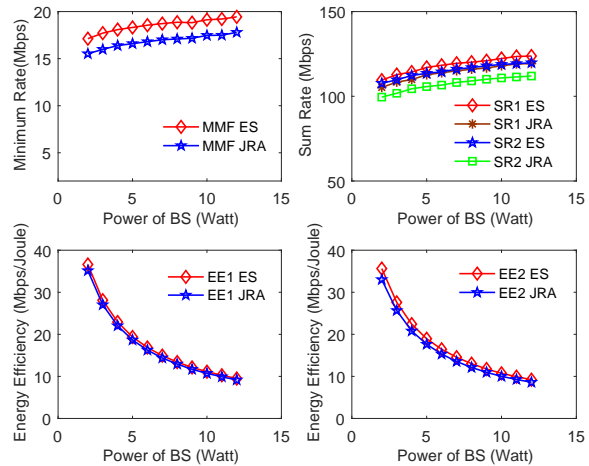}\protect\protect\caption{Comparison with the exhaustive search (ES)}\label{6}
\end{figure}

\section{\label{sec:CONCLUTION}Conclusion}

In this paper, we have studied the power allocation in downlink NOMA
systems to maximize the MMF, sum rate, and EE with weights or QoS
constraints. The optimal power allocation has been characterized in
closed or semi-closed forms for all considered performance criteria.
We have explicitly considered the power order constraints in power
allocation problems and introduced the concept of the SIC-stability
to avoid an equal power allocation on each channel in NOMA systems.
We have proposed an efficient method to jointly optimize the channel
assignment and power allocation in NOMA systems by exploiting the
matching algorithm along with the optimal power allocation. The simulation
results have shown that the proposed joint resource optimization method
achieve near-optimal performance.

\bibliographystyle{IEEEtran}
\bibliography{IEEEabrv,Mybib}

\begin{IEEEbiography}[{{\includegraphics[width=1in,height=1.25in,clip,keepaspectratio]{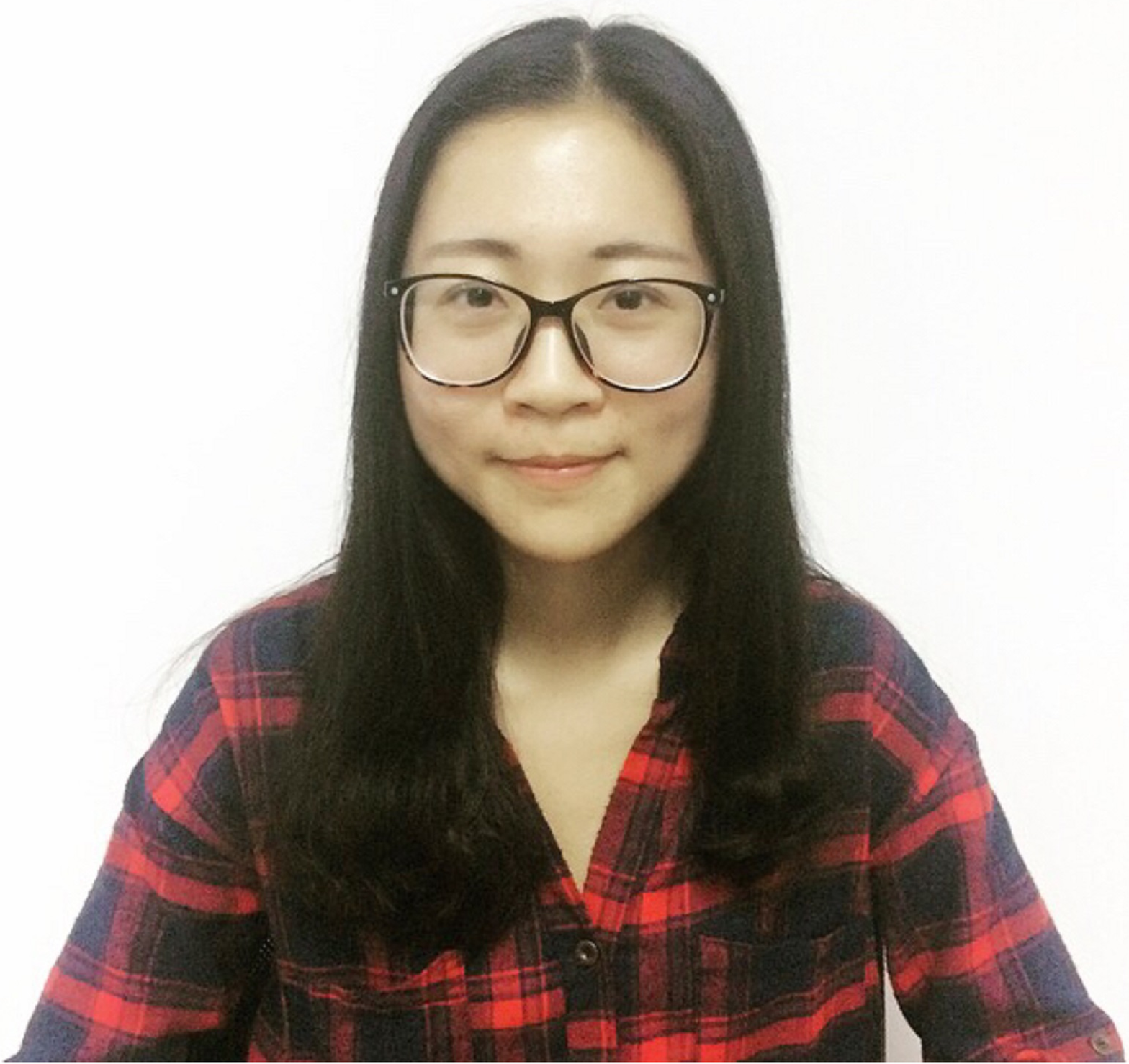}}}]{Jianyue Zhu}(S'17)
 received the B.S. degree from Nanjing Agricultural University, Nanjing, China, in
2015. She is currently working towards the Ph.D. degree in information
and communication engineering at the School of Information Science
and Engineering, Southeast University, Nanjing, China. Her current
research interests include non-orthogonal multiple access and optimization
theory.
\end{IEEEbiography}

\begin{IEEEbiography}[{\includegraphics[width=1in,height=1.25in,clip,keepaspectratio]{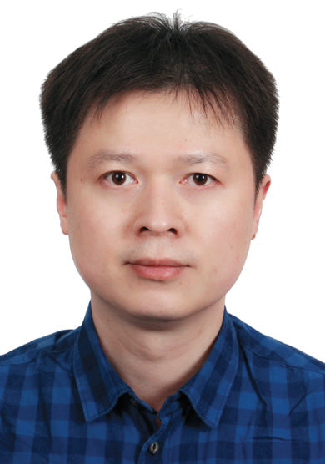}}]{Jiaheng Wang}(M'10,
SM'14) received the Ph.D. degree in electronic and computer engineering from the Hong Kong University of Science and Technology, Kowloon, Hong Kong, in 2010, and the B.E. and M.S. degrees from the Southeast University, Nanjing, China, in 2001 and 2006, respectively.

He is a Full Professor at the National Mobile Communications Research Laboratory (NCRL), Southeast University, Nanjing, China. From 2010 to 2011, he was with the Signal Processing Laboratory, KTH Royal Institute of Technology, Stockholm, Sweden. He also held visiting positions at the Friedrich Alexander University Erlangen-N\"{u}rnberg, N\"{u}rnberg, Germany, and the University of Macau, Macau. His research interests are mainly on optimization in communication systems and wireless networks.

Dr. Wang has published more than 80 articles on international journals and conferences. He serves as an Associate Editor for the IEEE Signal Processing Letters. He is a recipient of the Humboldt Fellowship for Experienced Researchers, and a recipient of the Best Paper Award in WCSP 2014.
\end{IEEEbiography}

\begin{IEEEbiography}[{{\includegraphics[width=1in,height=1.25in,clip,keepaspectratio]{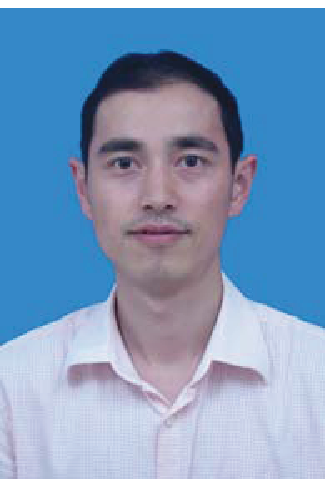}}}]{Yongming Huang}(M'10)
received the B.S. and M.S. degrees from Nanjing University,
Nanjing, China, in 2000 and 2003, respectively, and the Ph.D. degree
in electrical engineering from Southeast University, Nanjing, in 2007.

Since March 2007, he has been a faculty member with the School of
Information Science and Engineering, Southeast University, China,
where he is currently a full professor. In 2008-2009, he visited the
Signal Processing Laboratory, School of Electrical Engineering, Royal
Institute of Technology (KTH), Stockholm, Sweden. His current research
interests include MIMO wireless communications, cooperative wireless
communications and millimeter wave wireless communications. He has
published over 200 peer-reviewed papers, hold over 40 invention patents,
and submitted over 10 technical contributions to IEEE standards. Since
2012, he has served as an Associate Editor for the IEEE Transactions
on Signal Processing, EURASIP Journal on Advances in Signal Processing,
and EURASIP Journal on Wireless Communications and Networking.
\end{IEEEbiography}

\begin{IEEEbiography}[{{\includegraphics[width=1in,height=1.25in,clip,keepaspectratio]{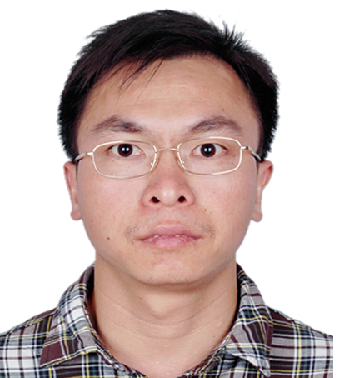}}}]{Shiwen he}(M'14)
received the M.S. degree from Chengdu University of Technology, Chengdu, China, and the Ph.D. degree in information
and communication engineering from Southeast University, Nanjing, China, in 2009 and 2013, respectively.

From 2013 to 2015, he was a Postdoctoral Researcher with the State Key Laboratory of Millimeter
Waves, Department of Radio Engineering, Southeast University. Since October 2015, he has been with the School of Information Science
and Engineering, Southeast University. He has authored or coauthored
over 80 technical publications, hold over 16 granted invention patents,
and submitted over 20 technical proposals to IEEE standards. From
December 2015, he serves as a technical sub-editor for IEEE 802.11aj.
His research interests include multiuser MIMO communication, cooperative
communications, energy efficient communications, millimeter wave communication,
and optimization theory.
\end{IEEEbiography}

\begin{IEEEbiography}[{\includegraphics[width=1in,height=1.25in,clip,keepaspectratio]{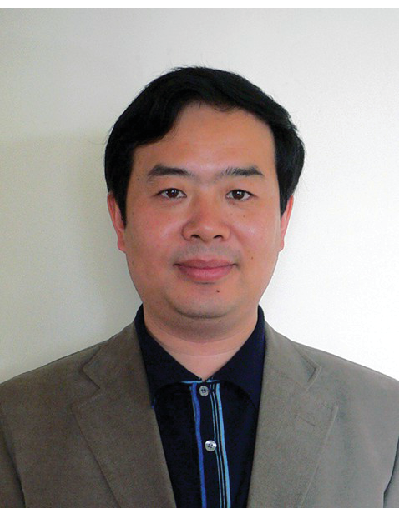}}]{Xiaohu You}(F'11)
was born in 1962. He received his Master and Ph.D. degrees from Southeast
University, Nanjing, China, in Electrical Engineering in 1985 and
1988, respectively. Since 1990, he has been working with National
Mobile Communications Research Laboratory at Southeast University,
where he held the rank of director and professor. His research interests
include mobile communication systems, signal processing and its applications.
He has contributed over 100 IEEE journal papers and 2 books in the
areas of adaptive signal processing, neural networks and their applications
to communication systems. From 1999 to 2002, he was the Principal
Expert of the C3G Project, responsible for organizing China\textquoteright s
3G Mobile Communications R\&D Activities. From 2001-2006, he was the
Principal Expert of the China National 863 Beyond 3G FuTURE Project.
Since 2013, he has been the Principal Investigator of China National
863 5G Project.

Prof. You was the general chairs of IEEE WCNC 2013 and IEEE VTC
2016. Now he is Secretary General of the FuTURE Forum, vice Chair
of China IMT-2020 Promotion Group, vice Chair of China National Mega
Project on New Generation Mobile Network. He was the recipient of
the National First Class Invention Prize in 2011, and he was selected
as IEEE Fellow in same year due to his contributions to development
of mobile communications in China.
\end{IEEEbiography}

\begin{IEEEbiography}[{{\includegraphics[clip,width=1in,height=1.25in,keepaspectratio]{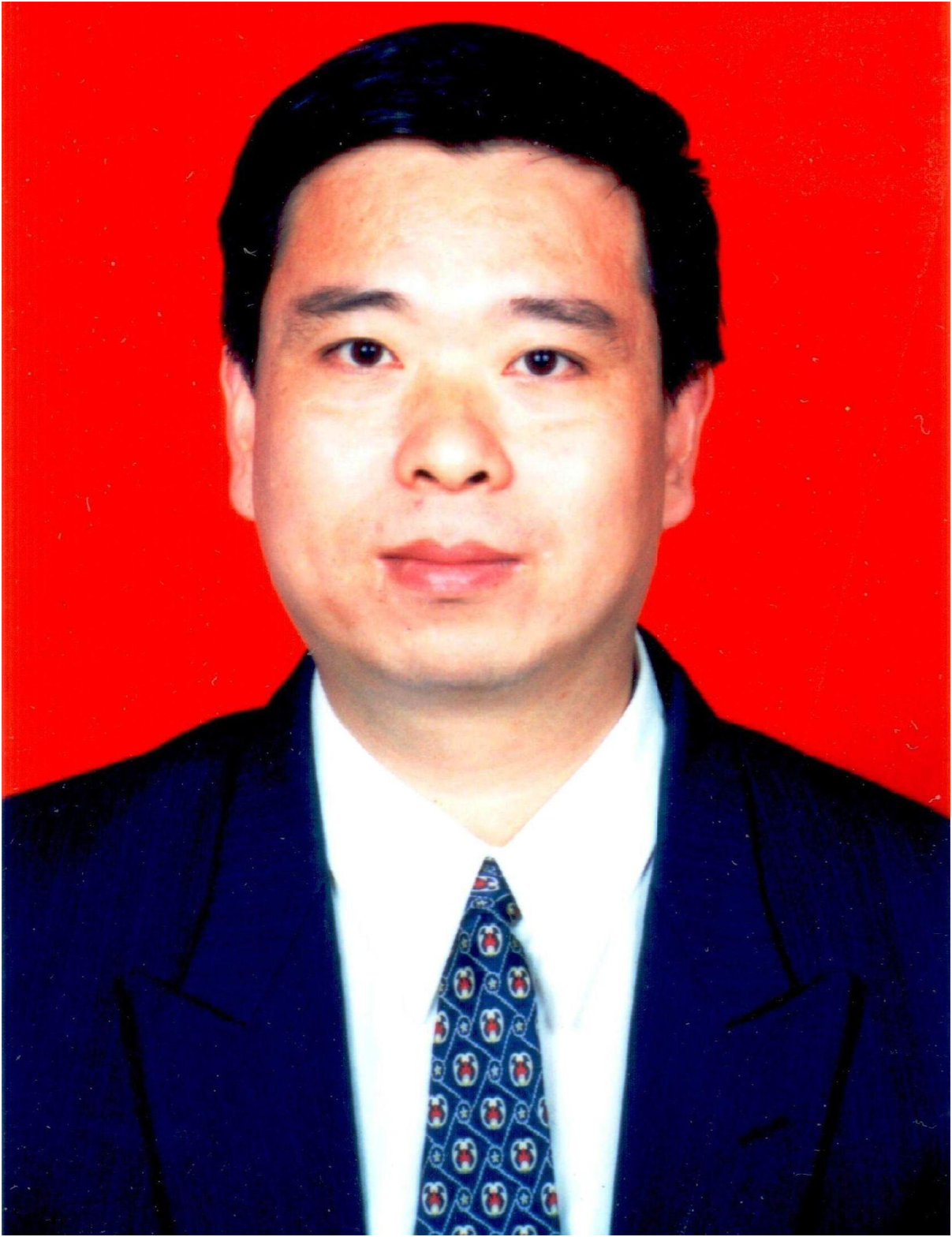}}}]{Luxi Yang}(M'96)
received the M.S. and Ph.D. degree in electrical engineering
from the Southeast University, Nanjing, China, in 1990 and 1993, respectively.

Since 1993, he has been with the Department of Radio Engineering,
Southeast University, where he is currently a full professor of information
systems and communications, and the Director of Digital Signal Processing
Division. His current research interests include signal processing
for wireless communications, MIMO communications, cooperative relaying
systems, and statistical signal processing. He has authored or co-authored
of two published books and more than 100 journal papers, and holds
30 patents. Prof. Yang received the first and second class prizes
of science and technology progress awards of the state education ministry
of China in 1998, 2002 and 2014. He is currently a member of Signal
Processing Committee of Chinese Institute of Electronics.
\end{IEEEbiography}

\end{document}